\newtheorem{theorem}{Theorem}
\newtheorem{lemma}{Lemma}
\newtheorem{remark}{Remark}[section]
\begin{document}
%
\title{Performance of Analog Beamforming Systems with Optimized Phase Noise Compensation}
\author{Vishnu~V.~Ratnam,~\IEEEmembership{Member,~IEEE.}
\thanks{V. V. Ratnam is with the Standards and Mobility Innovation Lab at Samsung Research America, Plano, TX, 75023 USA. (e-mail: vishnu.r@samsung.com, ratnamvishnuvardhan@gmail.com).} 
}
\maketitle

\begin{abstract}
Millimeter-wave and Terahertz frequencies, while promising high throughput and abundant spectrum, are highly susceptible to hardware non-idealities like phase-noise, which degrade the system performance and make transceiver implementation difficult. While several phase-noise compensation techniques have been proposed, there are limited results on the post-compensation system performance. Consequently, in this paper, a generalized reference-signal (RS) aided low-complexity phase-noise compensation technique is proposed for high-frequency, multi-carrier systems. The technique generalizes several existing solutions and involves an RS that is transmitted in each symbol, occupies adjacent sub-carriers, and is separated from the data by null sub-carriers. A detailed theoretical analysis of the post-phase-noise compensation performance is presented for an analog beamforming receiver under an arbitrary phase-noise model. Using this analysis, the performance-impact of several system parameters is examined and the throughput-optimal designs for the RS sequence, RS bandwidth, power allocation, number of null sub-carriers, and the number of estimated phase-noise spectral components are also derived. Simulations performed under 3GPP compliant settings suggest that the proposed scheme is robust to phase-noise modeling errors and can, with the optimized parameters, achieve better performance than several existing solutions.
\end{abstract}

\IEEEpeerreviewmaketitle
\begin{IEEEkeywords}
Phase noise, phase tracking reference signal, analog beamforming, millimeter wave, Terahertz, massive MIMO.
\end{IEEEkeywords}

\section{Introduction} \label{sec_intro}
Millimeter (mm) and Terahertz (THz) frequencies offer a huge increase in bandwidth in comparison to the sub-$6$GHz frequencies, and are thus strong candidate communication bands to successfully deliver the exponentially rising wireless data traffic \cite{Cisco_VNI}. These higher frequencies also enable implementation of massive antenna arrays on small form factors, making massive multiple-input-multiple-output (MIMO) systems practically more viable. However, hardware non-idealities like phase-noise (PhN) also tend to increase with the carrier frequency, degrading the system performance. The PhN arises from the device noise in the radio-frequency oscillator circuit of the transceiver, and causes random perturbations in the instantaneous frequency of the oscillator output \cite{Galton2019}. While low levels of PhN can cause a slow channel aging effect, also known as common phase error (CPE) \cite{Krishnan2014, Papazafeiropoulos2017}, higher levels of PhN additionally induce symbol distortion. This distortion manifests as inter-carrier-interference (ICI) in multi-carrier systems, such as those operating with orthogonal frequency division multiplexing (OFDM) \cite{Armada2001, Piazzo2002, Wu2002, Wu2006}, and can severely limit the signal-to-interference-plus-noise ratio (SINR) gains offered by the massive antenna arrays, at mm-wave/THz frequencies.\footnote{With 3GPP adopting OFDM at mm-wave frequencies \cite{3gpp_ts39211}, it is expected that multi-carrier systems such as OFDM will also be prime candidates for adoption at THz frequencies.} Consequently, the impact of PhN on multi-carrier massive antenna transceivers has received significant attention in the recent works. 

Information theoretic bounds on the capacity of channels affected by PhN have been explored for a few scenarios and PhN models in \cite{Lapidoth2002, Ghozlan2017}. The performance degradation due to PhN in different multi-antenna architectures has also been investigated in \cite{Pitarokoilis2012, Krishnan2014, Puglielli2016, Chen2017, Corvaja2017, Ratnam_PACE, Ratnam_Globecom2018, Ratnam_CACE}. 
To alleviate the degradation, several PhN estimation and compensation techniques for have also been proposed \cite{Wu2002, Wu2006, Bittner2007, Petrovic2007, Zou2007, Mehrpouyan2012, Robertson1995, 3gpp_ts39211, Casas2002, Leshem2017, Corvaja2009, Rabiei2010, Khanzadi2013, Tanany2001, Jansen2007, Randel2010}, as summarized in the next paragraph. However, despite the plethora of PhN estimation techniques, there are few analytical results on the achievable performance after PhN compensation. 
Such an analysis can help not only in determining the impact and optimal values of the parameters for the PhN estimation scheme, but also in determining the best transceiver architecture to adopt \cite{Chen2017}. As a step towards addressing this gap, in this paper we propose a low-complexity PhN estimation technique that generalizes several existing schemes and enjoys some benefits over others. We then present a rigorous analysis of the performance after the PhN compensation, and determine the throughput maximizing values of many system parameters.

Before discussing the proposed PhN estimation scheme, we provide a brief overview of the prior art. In one class of PhN estimation techniques, the PhN and data are estimated iteratively using a decision feedback estimation process \cite{Wu2002, Wu2006, Bittner2007, Petrovic2007, Zou2007, Mehrpouyan2012} for every symbol. This approach focuses on estimating PhN with minimal pilot overhead, at the expense of a higher computational cost and decoding latency \cite{Rabiei2010}. 
In another class of techniques, pilot aided, non-iterative PhN estimation approaches have been suggested \cite{Robertson1995, Casas2002, Leshem2017, Rabiei2010, Corvaja2009, Khanzadi2013, Tanany2001, Jansen2007, Randel2010}.\footnote{Some of these works also involve an optional decision-directed iterative feedback loop for better performance but with higher complexity.} For example, \cite{Robertson1995} explores the mitigation of CPE due to PhN using pilot sub-carriers spread across the system bandwidth, and a similar strategy was also adopted by the 3GPP Rel 15 new radio (NR) standard \cite{3gpp_ts39211}; \cite{Casas2002} explores the use of time domain pilots for PhN estimation and \cite{Leshem2017} additionally identifies the dominant PhN basis to estimate the principal components; \cite{Corvaja2009, Rabiei2010} explore pilot aided joint CPE and channel estimation, with linear interpolation between the pilots; and \cite{Khanzadi2013} explores a pilot aided Bayesian estimation of PhN via Kalman smoothing. In yet another approach \cite{Tanany2001, Jansen2007, Randel2010}, a high power sinusoidal pilot/reference signal (RS), separated from the data sub-carriers by a guard-band, is used to estimate and compensate for both CPE and ICI in the frequency domain. Recently, the use of such a sinusoidal RS for joint analog channel estimation and PhN compensation in analog beamforming systems has also been explored \cite{Ratnam_Globecom2018, Ratnam_CACE}. 
While most of these works use simulations to quantify the post-PhN-compensation performance \cite{Armada2001, Petrovic2007, Mehrpouyan2012, Wu2002, Krishnan2014, Wu2006, Bittner2007, Leshem2017, Corvaja2009, Khanzadi2013, Tanany2001, Jansen2007, Randel2010, Guo2017, Qi2018}, some analyze the CPE correction case \cite{Wu2004, Stefanatos2017} and some others use approximations involving first-order Taylor approximations to the PhN or assume independence of PhN estimation errors and channel noise \cite{Rabiei2010, Zou2007}. 

\begin{figure}[!htb]
\centering
\includegraphics[width= 0.45\textwidth]{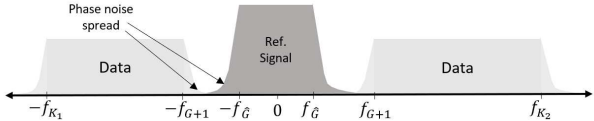}
\caption{An illustration of the base-band, pre-beamforming TX signal with phase noise.}
\label{Fig_TX_sig_illustate}
\end{figure}
In our proposed PhN estimation technique, the transmitter transmits a known band-limited pilot/RS in every symbol along with the data. The RS occupies a few adjacent sub-carriers and is separated from the data sub-carriers by null sub-carriers, as shown in Fig. \ref{Fig_TX_sig_illustate}. The receiver uses the received RS to estimate the dominant spectral components of PhN in the frequency domain. The benefits of the proposed scheme are as follows: firstly, unlike the decision feedback iterative approaches \cite{Wu2002, Wu2006, Bittner2007, Petrovic2007, Zou2007, Mehrpouyan2012}, the use of pilots reduces the computational complexity of PhN estimation and prevents error propagation. Secondly, unlike the pilots in \cite{Robertson1995, Casas2002, Leshem2017, Rabiei2010, Corvaja2009, Khanzadi2013}, the band-limitedness of the proposed RS obviates the need for channel equalization prior to PhN estimation, thus reducing estimation noise accumulation. Thirdly, the use of null sub-carriers and frequency domain PhN estimation prevent ICI from the data sub-carriers and inter-symbol interference (ISI) from adjacent symbols, respectively, during PhN estimation. Finally, the use of multiple sub-carriers for the RS spreads out the RS power spectrum, which can help meet spectral mask regulations with ease in comparison to \cite{Tanany2001, Jansen2007, Randel2010}, which are special cases of the proposed estimation technique with a sinusoidal RS. 
In this paper, we consider a multi-antenna system with receive analog beamforming and an arbitrary oscillator PhN model, and derive closed-form bounds for the SINR and throughput after PhN compensation with this technique. The bounds are used to analyze the impact of different system parameters on the performance. 
Furthermore, we also find the throughput-optimal values\footnote{Throughout the paper, we use `throughput-optimal' to refer to solutions that maximize an approximate bound to the system capacity after accounting for the RS overhead.} of some of these parameters, namely, the power allocation to the RS, the number of RS sub-carriers and null sub-carriers to use and the number of PhN spectral components to estimate. We also show that sequences with good aperiodic auto-correlation properties, e.g. Barker sequences \cite{Golomb1965, Leukhin2018} and aperiodic Zero Correlation Zone (ZCZ) sequences \cite{Donelan2002, Han2007, Wu2005}, are good candidates for the RS. 
The proposed technique is also consistent with the PhN estimation framework in 3GPP NR Rel 15, where such an RS is referred to as the \emph{phase-tracking reference signal} (PTRS) \cite{Guo2017, Qi2018, 3gpp_ts39211}. However, the current 3GPP PTRS has a different time-frequency pattern and only estimates and compensates for the CPE. 
The contributions of this paper are as follows:
\begin{enumerate}
\item We propose a generalized RS aided PhN compensation technique for multi-antenna OFDM systems.
\item We characterize the SINR and throughput of a multi-antenna receiver using analog beamforming and the PhN compensation technique, under an arbitrary PhN model and in terms of easily computable PhN statistics.
\item These PhN statistics are also derived in closed form for an important PhN model. 
\item We find throughput-optimal designs for the RS and also provide throughput-optimal solutions for the RS bandwidth, power allocation, the number of null sub-carriers and the number of PhN spectral components to estimate with the technique.
\item We compare of the performance of the scheme to several prior works under practically relevant simulation scenarios.
\end{enumerate}
The organization of the paper is as follows: the system model is discussed in Section \ref{sec_chan_model}; the PhN models and PhN spectral statistics are discussed in \ref{sec_PhN_stats}; the PhN estimation and compensation is discussed in Section \ref{sec_PhN_est_and_comp}; the signal, interference and noise components of the demodulated outputs are characterized in Section \ref{sec_anal_output}; the system throughput is studied in Section \ref{sec_perf_anal}; the optimal RS design is presented in Section \ref{sec_RS_design}; simulation results are provided in Section \ref{sec_sim_results}; and the conclusions are summarized in Section \ref{sec_conclusions}.

\textbf{Notation:} scalars are represented by light-case letters; vectors and matrices by bold-case letters; and sets by calligraphic letters. Additionally, ${\rm j} = \sqrt{-1}$, $\mathbb{E}\{\}$ represents the expectation operator, $c^*$ is the complex conjugate of a complex scalar $c$, $\|\mathbf{a}\|$ represents the $\ell$-2 norm of a vector $\mathbf{a}$, ${\mathbf{A}}^{\dag}$ is the Hermitian transpose of a complex matrix $\mathbf{A}$, $\lambda^{\downarrow}_a \{\mathbf{A}\}$ is the $a$-th largest eigenvalue of a matrix $\mathbf{A}$, $\delta(t)$ represents the Dirac delta function, $\delta^{A}_{a,b}$ is the modulo-$A$ Kronecker delta function with $\delta^{A}_{a,b} = 1$ if $a=b \ ({\rm mod} \ A)$ and $\delta^{A}_{a,b} = 0$ otherwise and $\mathrm{Re}\{\cdot\}$/$\mathrm{Im}\{\cdot\}$ refer to the real/imaginary components, respectively. 

\section{General Assumptions and System model} \label{sec_chan_model}
\begin{figure}[h] 
\centering 
\includegraphics[width= 0.45\textwidth]{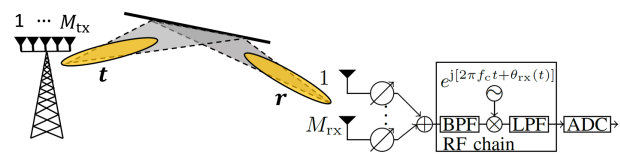}
\caption{An illustration of a single user downlink system with RX analog beamforming.}
\label{Fig_subarray_hybrid_arch}
\end{figure}
We consider the downlink of a single-user system, with one base-station/transmitter (TX) having $M_{\rm tx}$ antennas and one user-equipment/receiver (RX). The RX is assumed to have an analog beamforming architecture \cite{Molisch_HP_mag, Park2017a}, with ${M}_{\rm rx}$ antennas connected to one down-conversion chain via ${M}_{\rm rx}$ phase-shifters, as illustrated in Fig.~\ref{Fig_subarray_hybrid_arch}. While the TX may have an arbitrary architecture, we assume that the TX allocates one up-conversion chain and one data-stream to this RX. 
The TX and RX are assumed to create narrow beams using an arbitrary beamforming technique \cite{Molisch_VarPhaseShift, Alkhateeb2015, Sohrabi2016, Sudarshan2006, Vishnu_ICC2017, Caire2017, Ratnam_HBwS_jrnl, Park2017, Arora2020, Tsinos2020}, and are assumed to have sufficient prior channel knowledge to implement the analog beamformers.\footnote{Such channel knowledge may include either instantaneous channel parameters or average channel parameters \cite{Molisch_HP_mag}.} 
The results in the paper can also directly be extended to a multi-user scenario, where the TX provides orthogonal access to the different users in time, frequency or space. 
For this RX, the TX transmits OFDM symbols with $K$ sub-carriers. The sub-carriers are indexed as $\mathcal{K} \triangleq \{-K_1,...,K_2\}$ for convenience, where $K_1,K_2$ can be arbitrary and $K = K_1+K_2+1$. Data is transmitted on the $K_1-G$ lower and $K_2-G$ higher sub-carriers, i.e. on indices $\mathcal{K} \setminus \mathcal{G}$, where $\mathcal{G} \triangleq \{-G,..,0,..,G\}$ and $G$ is a system parameter determining the number of non-data sub-carriers. The RS for tracking PhN is transmitted on sub-carriers $\hat{\mathcal{G}} \triangleq \{-\hat{G},..,0,..,\hat{G}\}$, where $\hat{G} \leq G$, while the remaining sub-carriers $\mathcal{G} \setminus \hat{\mathcal{G}}$ are nulled to act as a guard-band between the reference and data signals. Here $\hat{G}, G$ are system parameters to be optimized later and the signal structure is illustrated in Fig.~\ref{Fig_TX_sig_illustate} for convenience. 
Under these conditions, the complex equivalent transmit signal of the $0$-th OFDM symbol to the representative RX can be expressed as:
\begin{eqnarray} \label{eqn_tx_signal}
\mathbf{s}_{\rm tx}(t) &=& \sqrt{\frac{2}{T_{\rm s}}}\mathbf{t} \bigg[ \sum_{g \in \hat{\mathcal{G}}} p_{g} e^{{\rm j} 2 \pi f_{g} t} \nonumber \\
&& + \sum_{k \in \mathcal{K} \setminus \mathcal{G}} \!\! x_k e^{{\rm j} 2 \pi f_{k} t} \bigg] e^{{\rm j} [2 \pi f_{\rm c} t + \theta_{\rm tx}(t)]},
\end{eqnarray} 
for $-T_{\rm cp} \leq t \leq T_{\rm s}$, where $T_{\rm s}$ and $T_{\rm cp}$ are the symbol duration and the cyclic prefix duration, respectively, $\mathbf{t}$ is the $M_{\rm tx}\times 1$ unit-norm TX beamforming vector, $p_g$ is the coefficient of the RS on the $g$-th subcarrier, $x_k$ is the data signal on the $k$-th sub-carrier, $f_{\rm c}$ is the carrier frequency, $f_k = k/T_{\rm s}$ represents the frequency offset of the $k$-th sub-carrier from $f_{\rm c}$ and $\theta_{\rm tx}(t)$ represents the PhN process of the TX oscillator. Here we define the \emph{complex equivalent} signal such that the actual (real) transmit signal is given by $\mathrm{Re}\{\mathbf{s}_{\rm tx}(t)\}$. For the data sub-carriers ($k \in \mathcal{K}\setminus \mathcal{G}$), we assume the use of independent, zero-mean data streams with equal power allocation $\mathbb{E}\{{|x_k|}^2\} = E_{\rm d}$ and for RS, we define $E_{\rm r} \triangleq \sum_{g \in \hat{\mathcal{G}}} {|p_g|}^2$. 
The transmit power constraint is then given by $E_{\rm r} + (K-|\mathcal{G}|) E_{\rm d} \leq E_{\rm s}$, where $E_{\rm s}$ is the total OFDM symbol energy (excluding the cyclic prefix). In addition, to emulate spectral mask regulations, we also consider a bound $\bar{E}$ on the transmit power per sub-carrier, i.e., ${|p_g|}^2, E_{\rm d} \leq \bar{E}$. With a slight abuse of notation, throughout the paper we shall use $p_k$ with an unrestricted subscript $k$, with the understanding that $p_k = 0$ if $|k| > \hat{G}$. For convenience, we also define the aperiodic auto-correlation function of the RS sequence as: $\mathscr{R}_{p}(a) \triangleq \sum_{g = -\hat{G}}^{\hat{G}} p_g p^{*}_{g+a}$.

The channel to the representative RX is assumed to have $\tilde{L}$ multi-path components (MPCs), and the corresponding $M_{\rm rx} \times M_{\rm tx}$ channel impulse response matrix and its Fourier transform, respectively, are given as \cite{Akdeniz2014}:
\begin{subequations} \label{eqn_channel_impulse_resp}
\begin{eqnarray} 
\mathbf{H}(t) &=& \sum_{\ell=0}^{\tilde{L}-1} \alpha_{\ell} \mathbf{a}_{\rm rx}(\ell) {\mathbf{a}_{\rm tx}(\ell)}^{\dag} \delta(t - \tau_{\ell}) \\
\boldsymbol{\mathcal{H}}(f) &=& \sum_{\ell=0}^{\tilde{L}-1} \alpha_{\ell} \mathbf{a}_{\rm rx}(\ell) {\mathbf{a}_{\rm tx}(\ell)}^{\dag} e^{- {\rm j} 2 \pi (f_{\rm c}+ f) \tau_{\ell}},
\end{eqnarray}
\end{subequations}
where $\alpha_{\ell}$ is the complex amplitude, $\tau_{\ell}$ is the delay and $\mathbf{a}_{\rm tx}(\ell), \mathbf{a}_{\rm rx}(\ell)$ are the $M_{\rm tx} \times 1$ TX and $M_{\rm rx} \times 1$ RX response vectors, respectively, of the $\ell$-th MPC. 
For example, the $\ell$-th RX response vector for a uniform planar antenna array with $M_{\rm H}$ horizontal and $M_{\rm V}$ vertical elements (${M}_{\rm rx} = {M}_{\rm H} {M}_{\rm V}$) is given by $\mathbf{a}_{\rm rx}(\ell) = \tilde{\mathbf{a}}_{\rm rx}\big(\psi^{\rm rx}_{\rm azi}(\ell), \psi^{\rm rx}_{\rm ele}(\ell)\big)$, where: 
\begin{align} \label{eqn_array_response_planar}
{[\tilde{\mathbf{a}}_{\rm rx}\big(\psi_{\rm azi}, \psi_{\rm ele}\big)]}_{{M}_{\rm V}h + v} &= \exp{\bigg\{{\rm j} 2 \pi \frac{\Delta_{\rm H} h \sin[\psi_{\rm azi}]\sin[\psi_{\rm ele}]}{\lambda}} \nonumber \\
& + {\rm j} 2 \pi \frac{\Delta_{\rm V} (v-1) \cos[\psi_{\rm ele}]}{\lambda} \bigg\}, 
\end{align}
for $h \in \{0,..,{M}_{\rm H}-1\}$ and $v \in \{1,..,{M}_{\rm V}\}$, $\psi^{\rm rx}_{\rm azi}(\ell)$, $\psi^{\rm rx}_{\rm ele}(\ell)$ are the azimuth and elevation angles of arrival for the $\ell$-th MPC, $\Delta_{\rm H}, \Delta_{\rm V}$ are the horizontal and vertical antenna spacings and $\lambda$ is the carrier wavelength. The expression for $\mathbf{a}_{\rm tx}(\ell)$ can be obtained similarly. 
Without loss of generality, let the channel MPCs with non-negligible power along the TX-RX analog beams be indexed as $\{0,..,L-1\}$, where $L \leq \tilde{L}$. Due to the large antenna arrays and associated narrow analog beams at the TX and RX, the effective channel $\mathbf{r}^{\dag} \mathbf{H}(t) \mathbf{t}$ typically has a small delay spread (i.e., $\tau_{L-1} \ll T_{\rm s}$) and a large coherence bandwidth \cite{Wyne2011}. Consequently, and since the TX can afford an accurate oscillator, we shall neglect variation of the TX PhN within this small delay spread, i.e., $\theta_{\rm tx}(t-\tau_{L-1}) \approx \theta_{\rm tx}(t)$. Additionally, we shall also assume that the non-data subcarriers ($k \in \mathcal{G}$) lie within a coherence bandwidth of the effective channel i.e. $T_{\rm s}/2g \geq \tau_{L-1}$.

The RX is assumed to have a low noise amplifier followed by a band-pass filter (BPF) at each antenna, that leaves the desired signal un-distorted but suppresses the out-of-band noise. The filtered signals at each antenna are then phase-shifted by an RX analog beamformer, combined and down-converted via an RX oscillator and then sampled by an analog-to-digital converter (ADC) at $K/T_{\rm s}$ samples/sec, as depicted in Fig.~\ref{Fig_subarray_hybrid_arch}. Assuming perfect timing synchronization at the RX, the sampled received base-band signal for the $0$-th OFDM symbol can be expressed as: 
\begin{eqnarray} \label{eqn_rx_signal}
s_{\rm rx, BB}[n] = \sum_{\ell=0}^{L-1} \alpha_{\ell} {\mathbf{r}}^{\dag} \mathbf{a}_{\rm rx}(\ell) {\mathbf{a}_{\rm tx}(\ell)}^{\dag} \mathbf{t} \bigg[ \sum_{k \in \hat{\mathcal{G}}} p_{k} e^{{\rm j} 2 \pi f_{k} (\frac{nT_{\rm s}}{K} -\tau_{\ell})}  \!\!\!\! \!\!\!\! \nonumber \\
 + \!\!\!\sum_{k \in \mathcal{K} \setminus \mathcal{G}} \!\!\! x_k e^{{\rm j} 2 \pi f_{k} (\frac{nT_{\rm s}}{K}-\tau_{\ell})} \bigg] e^{{\rm j} [\theta_{\rm tx}[n] + \theta_{\rm rx}[n]]} + w[n], \!\!\!\!
\end{eqnarray}
for $0 \leq n < K$, where the $\mathrm{Re}/\mathrm{Im}$ parts of $s_{\rm rx, BB}[n]$ are the outputs corresponding to the in-phase and quadrature-phase components of the RX oscillator, $\mathbf{r}$ is the ${M}_{\rm rx} \times 1$ unit norm RX beamformer, $\theta_{\rm tx}[n] \triangleq \theta_{\rm tx}(nT_{\rm s}/K)$, $\theta_{\rm rx}[n]$ is the sampled PhN process of the RX oscillator, $w[n] \sim \mathcal{CN}(0, \mathrm{N}_0 K)$ is the post-beamforming \emph{effective} additive Gaussian noise process with independent and identically distributed samples and $\mathrm{N}_0$ is the noise power spectral density. Conventional OFDM demodulation is then performed on the combined base-band signal \eqref{eqn_rx_signal}. The demodulated sub-carriers $k \in \mathcal{G}$ are used for PhN estimation, while sub-carriers $k \in \mathcal{K} \setminus \mathcal{G}$ are used for data demodulation, as shall be discussed in Section \ref{sec_PhN_est_and_comp}. Prior estimates of the effective channel $\mathbf{r}^{\dag} \mathbf{H}(t) \mathbf{t}$ are not required for the proposed PhN estimation algorithm in Section \ref{subsec_PhN_est}, which prevents accumulation of estimation noise. 

\subsection{Phase Noise Model}
The PhN of a free-running oscillator is often modeled as a Wiener process \cite{Piazzo2002, Wu2006, Petrovic2007}. In practice, however, oscillators are usually driven by a phase lock loop (PLL) to reduce the output PhN. Several models have been proposed for the PhN of such PLL circuits: from the Ornstein-Ulhenbeck (OU) process \cite{Petrovic2007, Mehrotra2002} used in theoretical investigations, to the filtered Gaussian models used in system level simulations \cite{TR38803_PhnModel} etc. To keep the analysis compatible with any such PhN models, in this work we model $\theta[n] \triangleq \theta_{\rm tx}[n] + \theta_{\rm rx}[n]$ as an arbitrary random process satisfying the following criteria:
\begin{enumerate}
\item[C1] $e^{-{\rm j} \theta[n]}$ is a wide-sense stationary process.
\item[C2] The power spectrum of $e^{-{\rm j}\theta[n]}$ is dominated by the low frequency components.
\end{enumerate} 
All of the aforementioned PhN models satisfy these two generic criteria. Finally, since oscillator data sheets usually report the PhN using metrics that quantify its power spectral density \cite{mmMagic}, we shall strive to present our results only in terms of the power spectral density of $e^{-{\rm j}\theta[n]}$. 

\section{Phase noise and channel noise statistics} \label{sec_PhN_stats}
In this section, we analyze the statistics of the PhN and the channel noise. Note that the sampled channel noise $w[n]$ and the sampled PhN $e^{{\rm j} \theta[n]}$ for $0 \leq n < K$ can be expressed using their normalized Discrete Fourier Transform (nDFT) coefficients as:
\begin{subequations} \label{eqn_DFT_coeffs}
\begin{eqnarray}
w[n] &=& \sum_{k \in \mathcal{K}} W_k e^{{\rm j} 2 \pi k n/K} \label{eqn_DFT_coeffcs_noise} \\
e^{{\rm j} \theta[n]} &=& \sum_{k \in \mathcal{K}} \Omega_k e^{{\rm j} 2 \pi k n/K}, \label{eqn_DFT_coeffcs_PhN}
\end{eqnarray}
\end{subequations}
where $W_{k} = \frac{1}{K} \sum_{n=0}^{K-1} w[n] e^{-{\rm j} 2 \pi k n/K}$ and $\Omega_{k} = \frac{1}{K} \sum_{n=0}^{K-1} e^{-{\rm j} \theta[n]} e^{-{\rm j} 2 \pi k n/K}$ are the corresponding nDFT coefficients. Here nDFT is a slightly unorthodox definition for Discrete Fourier Transform, where the normalization by $K$ is performed while finding $W_{k}, \Omega_{k}$ instead of in \eqref{eqn_DFT_coeffs}. These coefficients satisfy the following remark:
\begin{remark} \label{Lemma_PhN_properties}
The nDFT coefficients $W_{k}, \Omega_{k}$ are periodic with period $K$ and satisfy:
\begin{subequations}
\begin{eqnarray}
\sum_{k \in \mathcal{K}} \Omega_{k_1+k} \Omega_{k_2+k}^{*} &=& \delta^{K}_{k_1,k_2} \label{eqn_lemma_1} \\
\mathbb{E}\{ W_{k_1} {W_{k_2}}^{\dag}\} &=& \delta_{k_1,k_2}^{K} \mathrm{N}_0, \label{eqn_lemma_N_properties}
\end{eqnarray}
\end{subequations}
for arbitrary integers $k_1,k_2$.
\end{remark}
The proof is skipped for brevity, and can be found in \cite{Ratnam_CACE}. 
Note that \eqref{eqn_lemma_1} implies that the nDFT coefficients are coupled together, which makes the analysis difficult -- a fact neglected in first-order Taylor approximation based analysis \cite{Rabiei2010}. 
Using the fact that $e^{-{\rm j} \theta[n]}$ is wide-sense stationary from criterion C1, we also define the second order statistical parameters: $\Delta_{k_1,k_2} \triangleq \mathbb{E}\{\Omega_{k_1} \Omega_{k_2}^{*}\} $ and $\mu(a, b) \triangleq \sum_{c = a - b}^{a+b} \Delta_{c,c}$. Note that from criterion C2, we have $\Delta_{k,k} \approx 0$ and $\mu(0,a) \approx 1$ for $0 \ll |k|, |a| \leq K/2$. These statistics can be computed for an arbitrary PhN process using Monte Carlo simulations. However for some important cases, such as the Wiener PhN model, they can be computed in closed from as shown in Appendix \ref{appdix_wiener}. The analysis and observations from Appendix \ref{appdix_wiener} also justify criterion C2.

\section{Phase noise estimation and compensation} \label{sec_PhN_est_and_comp}
In this section, we discuss the PhN estimation and compensation approach. From the definition of nDFT coefficients and from \eqref{eqn_rx_signal}, note that the received signal on sub-carrier $k$ can be expressed as:
\begin{eqnarray} 
Y_{k} &=& \frac{1}{K}\sum_{n=0}^{K-1} s_{\rm rx, BB}[n] e^{- {\rm j} 2 \pi k n/K} \nonumber \\
& \approx & \sum_{g \in \hat{\mathcal{G}}} \beta_{0} p_{g} \Omega_{k - g} + \!\!\! \sum_{\bar{k} \in \mathcal{K} \setminus \mathcal{G}} \!\!\! \beta_{\bar{k}} x_{\bar{k}} \Omega_{k - \bar{k}} + W_k, \label{eqn_Yk}
\end{eqnarray}
where $\beta_k \triangleq \mathbf{r}^{\dag} \boldsymbol{\mathcal{H}}(f_k) \mathbf{t}$ and we use $\beta_{g} \approx \beta_{0}$ for $g \in \hat{\mathcal{G}}$ as discussed in Section \ref{sec_chan_model}. 
As is evident from \eqref{eqn_Yk}, the transmit signal $x_{\bar{k}}$ on sub-carrier $\bar{k}$ leaks into the received signal of a neighboring sub-carrier $k$. This leakage is in proportion to the nDFT coefficient $\Omega_{k - g}$ and causes ICI. Such ICI can be suppressed by appropriate PhN estimation and compensation, as shall be discussed in the following subsections.

\subsection{Phase noise estimation} \label{subsec_PhN_est}
From criterion C2, note that the PhN nDFT coefficients $\Omega_k$ for the lower frequency indices dominate its behavior and impact. Consequently, we shall only estimate $\Omega_{k}$ for the dominant spectral components: $k \in \mathcal{U}$, where we define $\mathcal{U} \triangleq \{-U,..,U\}$. Here $U$ is a design parameter whose throughput-optimal value shall be discussed later in Section \ref{sec_RS_design}. These coefficients $\{\Omega_{k} \vert k \in \mathcal{U} \}$ shall be estimated from the received RS signal, via the sub-carrier outputs $\{Y_k \mathrel{\stretchto{\mid}{3ex}} |k| \leq \hat{G}+U\}$. Note that to suppress the ICI from the data sub-carriers during this estimation process, the number of null sub-carriers in $\mathcal{G}$ can be chosen to be sufficiently large (see Fig.~\ref{Fig_TX_sig_illustate}). Thus, neglecting the interference from these data sub-carriers, from \eqref{eqn_Yk}, the received signal on a sub-carrier $k \in [-\hat{G}-U, \hat{G}+U]$ can be expressed as:
\begin{eqnarray} 
Y_{k} & \stackrel{(1)}{\approx} & \beta_0 \Big[ \sum_{g = -\hat{G}}^{\hat{G}} p_{g} \Omega_{k-g} \Big] + W_{k} \nonumber \\
& = & \beta_0 \Big[ \sum_{u \in \mathcal{U}} p_{k-u} \Omega_{u} \Big] + W_{k} \nonumber \\
&& + \beta_0 \Big[ \sum_{g = -2\hat{G}-U}^{-U-1} p_{k-g} \Omega_{g} + \sum_{g = U+1}^{2\hat{G}+U} p_{k-g} \Omega_{g} \Big],
\label{eqn_Yk_ref}
\end{eqnarray}
where $p_{a} = 0$ if $|a| > \hat{G}$. The first term in \eqref{eqn_Yk_ref} involves PhN components we desire to estimate, the second term is the channel noise, while the last term involves interference from the higher frequency PhN terms. Note that the neglected data interference in ${\scriptstyle\stackrel{(1)}{\approx}}$ is proportional to $E_{\rm d} [1 - \mu(0, G-\hat{G}-U)]$, which rapidly reduces to $0$ with increasing $G$ from criterion C2. Thus, to keep this approximation tight, we assume $G \geq \hat{G} + 2U$ for the rest of the analysis. Equation \eqref{eqn_Yk_ref} can be expressed in matrix form as:
\begin{eqnarray} \label{eqn_PhN_est_mtrx_form}
\mathbf{Y}^{(U)} = \beta_0 \mathbf{P}^{(U)} \boldsymbol{\Omega}^{(U)} + \beta_0 \mathbf{Q}^{(U)} \boldsymbol{\Omega}^{(\rm inf)} + \mathbf{W}^{(U)},
\end{eqnarray}
where $\mathbf{Y}^{(U)}, \mathbf{W}^{(U)}$ are $(2\hat{G}+2U+1) \times 1$ vectors with the $i$-th entries being $Y_{i-\hat{G}-U-1}$ and $W_{i-\hat{G}-U-1}$, respectively, $\boldsymbol{\Omega}^{(U)}$ is a $(2U+1) \times 1$ vector with ${[\boldsymbol{\Omega}^{(U)}]}_{i} = \Omega_{i-U-1}$ and $\mathbf{P}^{(U)}$ is a $(2\hat{G}+2U+1) \times (2U+1)$ rectangular, banded Toeplitz matrix with ${[\mathbf{P}^{(U)}]}_{i,j} = p_{-\hat{G}+i-j}$. 
Furthermore, $\boldsymbol{\Omega}^{(\rm inf)}$ is a $4\hat{G} \times 1$ vector with ${[\boldsymbol{\Omega}^{(\rm inf)}]}_{i} = \Omega_{-2\hat{G}-2U-1+i}$ if $i \leq 2 \hat{G}$ and ${[\boldsymbol{\Omega}^{(\rm inf)}]}_{i} = \Omega_{4\hat{G}+2U+1-i}$ otherwise, and $\mathbf{Q}^{(U)}$ is a $(2\hat{G}+2U+1) \times 4 \hat{G}$ rectangular matrix with ${[\mathbf{Q}^{(U)}]}_{i,j} = p_{\hat{G}+i-j}$ for $j \leq 2 \hat{G}$ and ${[\mathbf{Q}^{(U)}]}_{i,j} = p_{\hat{G}-2U-1+i-j}$ otherwise. For convenience of the reader, the cumbersome \eqref{eqn_PhN_est_mtrx_form} is represented pictorially in Fig.~\ref{Fig_PhN_est_mtrx_form}.
\begin{figure}[!htb]
\centering
\vspace{-0.4cm}
\includegraphics[width= 0.49\textwidth]{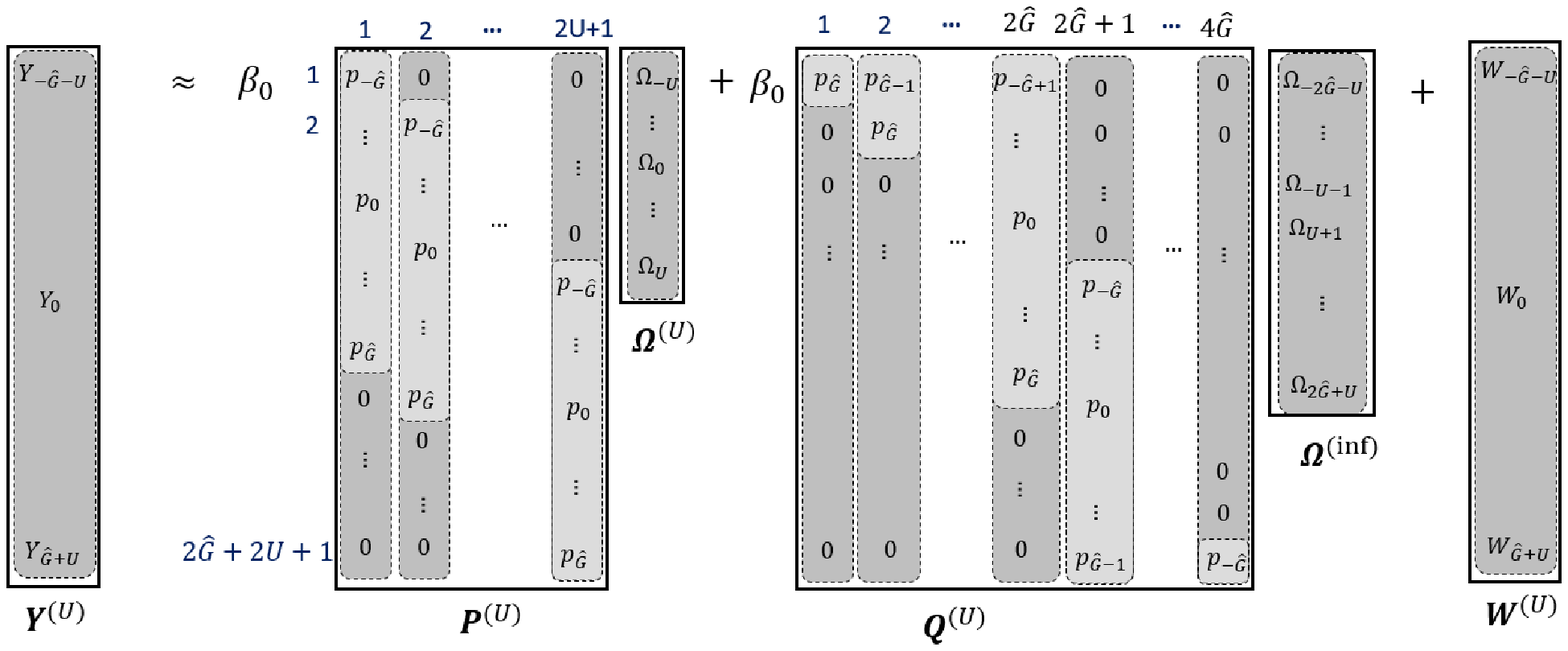} 
\vspace{-0.3cm}
\caption{Pictorial depiction of equation \eqref{eqn_PhN_est_mtrx_form}.}
\label{Fig_PhN_est_mtrx_form}
\end{figure}
Assuming $\mathbf{R}_{p} \triangleq {[\mathbf{P}^{(U)}]}^{\dag} \mathbf{P}^{(U)}$ to be full rank without loss of generality, the least squares (LS) estimate for $\beta_0 \boldsymbol{\Omega}^{(U)}$ can then be obtained as:
\begin{subequations}
\begin{eqnarray} \label{eqn_PhN_est_LS}
\widehat{\beta_{0} \boldsymbol{\Omega}^{(U)}} &=& \mathbf{R}_{p}^{-1} {[\mathbf{P}^{(U)}]}^{\dag} \mathbf{Y}^{(U)} \nonumber \\
&=& \beta_0 (1 + \chi) \boldsymbol{\Omega}^{(U)} + \beta_0 \boldsymbol{\Phi}^{(U)} + \hat{\mathbf{W}}^{(U)},
\end{eqnarray}
where $\hat{\mathbf{W}}^{(U)} \sim \mathcal{CN} \big(\mathbb{O}_{(2U+1) \times 1}, \mathbf{R}_{p}^{-1} \mathrm{N}_0 \big)$ and we define: 
\begin{align}
\chi & \triangleq \mathbb{E} \left\{ {\boldsymbol{\Omega}^{(U)}}^{\dag} \mathbf{R}_{p}^{-1} {[\mathbf{P}^{(U)}]}^{\dag} \mathbf{Q}^{(U)} \boldsymbol{\Omega}^{(\rm inf)} \right\} \Big/ \mu(0,U) \label{eqn_chi_defn} \\ 
\boldsymbol{\Phi}^{(U)} & \triangleq \mathbf{R}_{p}^{-1} {[\mathbf{P}^{(U)}]}^{\dag} \mathbf{Q}^{(U)} \boldsymbol{\Omega}^{(\rm inf)} - \chi \boldsymbol{\Omega}^{(U)}. 
\end{align}
\end{subequations}
Here, $\hat{\mathbf{W}}^{(U)}$ is the estimation noise, $\chi$ quantifies the fraction of interference that is aligned with the desired signal $\boldsymbol{\Omega}^{(U)}$, and $\boldsymbol{\Phi}^{(U)}$ is the uncorrelated component of the interference. While in most practical settings we may have $\chi \approx 0$, here we shall not make that assumption for generality.

While the linear minimum mean square error (LMMSE) estimation of $\boldsymbol{\Omega}^{(U)}$ may lead to less estimation error than with LS estimation, here we consider the latter due to two reasons. Firstly, the LS estimate in \eqref{eqn_PhN_est_LS} does not require knowledge of $\beta_0$, thus preventing channel estimation errors from affecting the PhN estimates, i.e., error propagation. Secondly, unlike LMMSE, the LS estimate does not require the knowledge of the PhN cross-statistics $\{\Delta_{u_1,u_2}| u_1 \neq u_2\}$, which may be unavailable in practice for a realistic PhN process. A comparison of the two estimators is performed via simulations later in Section \ref{sec_sim_results}. For convenience, let us also define $\Phi_u \triangleq {[\boldsymbol{\Phi}^{(U)}]}_{U+1+u}$ and $\hat{W}_{u} \triangleq {[\hat{\mathbf{W}}^{(U)}]}_{u+U+1}$ as the uncorrelated interference and the estimation noise, respectively, in the LS estimate of $\Omega_u$ for $u \in \mathcal{U}$. We then have the following result on $\boldsymbol{\Phi}^{(U)}$:
\begin{lemma} \label{Th_PhN_est_interf}
The interference term in \eqref{eqn_PhN_est_LS}, viz. $\boldsymbol{\Phi}^{(U)}$ satisfies: 
\begin{subequations}
\begin{align}
& \sum_{u \in \mathcal{U}} {| (1 \!+\! \chi)\Omega_u + \Phi_u|}^2 \leq 1 + \Upsilon_p \label{eqn_revLemma_1} \\
& \sum_{u \in \mathcal{U}} \mathbb{E} \left\{ {| \Phi_u|}^2 \right\} \leq \Upsilon_p \mu(0, 2\hat{G}+U) \nonumber \\
& \qquad  \qquad \qquad \qquad - \big(\Upsilon_{p} \!+\! {|\chi|}^2 \big) \mu(0, U) \label{eqn_revLemma_2} \\
& {|\chi|}^2 \leq \frac{\Upsilon_p [\mu(0, 2\hat{G}+U) - \mu(0,U)]}{\mu(0, U)}, \label{eqn_revLemma_4} \\
& \sum_{u \in \mathcal{U}} \mathbb{E}  \Big\{ \Phi_u \Omega_u^{*}  \Big\} = 0 \label{eqn_revLemma_3}
\end{align}
\end{subequations}
where $\Upsilon_p \triangleq \lambda_1^{\downarrow} \big\{ {[\mathbf{Q}^{(U)}]}^{\dag} \mathbf{P}^{(U)} \mathbf{R}_{p}^{-2} {[\mathbf{P}^{(U)}]}^{\dag} \mathbf{Q}^{(U)} \big\}$.\end{lemma}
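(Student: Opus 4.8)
The plan is to reduce all four bounds to two facts: an algebraic simplification of the error vector, and the interpretation of $\Upsilon_p$ as a squared operator norm. Introducing the shorthand $\mathbf{M} \triangleq \mathbf{R}_{p}^{-1} {[\mathbf{P}^{(U)}]}^{\dag} \mathbf{Q}^{(U)}$, a direct computation gives $\mathbf{M}^{\dag}\mathbf{M} = {[\mathbf{Q}^{(U)}]}^{\dag} \mathbf{P}^{(U)} \mathbf{R}_{p}^{-2} {[\mathbf{P}^{(U)}]}^{\dag} \mathbf{Q}^{(U)}$, so that $\Upsilon_p = \lambda_1^{\downarrow}\{\mathbf{M}^{\dag}\mathbf{M}\}$ is exactly the squared spectral norm of $\mathbf{M}$; hence $\|\mathbf{M}\mathbf{v}\|^2 \le \Upsilon_p \|\mathbf{v}\|^2$ for every vector $\mathbf{v}$. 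Recognizing $\Upsilon_p$ in this way is the step that makes every bound fall out. Moreover, substituting the definition of $\boldsymbol{\Phi}^{(U)}$ yields the pathwise identity $(1+\chi)\boldsymbol{\Omega}^{(U)} + \boldsymbol{\Phi}^{(U)} = \boldsymbol{\Omega}^{(U)} + \mathbf{M}\boldsymbol{\Omega}^{(\rm inf)}$, since the two $\chi \boldsymbol{\Omega}^{(U)}$ terms cancel.

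Next I would collect the statistical ingredients. Setting $k_1 = k_2 = 0$ in Remark \ref{Lemma_PhN_properties} gives the pathwise Parseval identity $\sum_{k \in \mathcal{K}} |\Omega_k|^2 = 1$. Since the index sets carried by $\boldsymbol{\Omega}^{(U)}$ (namely $|k| \le U$) and by $\boldsymbol{\Omega}^{(\rm inf)}$ are disjoint and both contained in $\{|k| \le 2\hat{G}+U\}$, this gives $\|\boldsymbol{\Omega}^{(U)}\|^2 + \|\boldsymbol{\Omega}^{(\rm inf)}\|^2 \le 1$ for every realization. Taking expectations and using $\Delta_{k,k} = \mathbb{E}\{|\Omega_k|^2\}$ together with the definition of $\mu(\cdot,\cdot)$ yields $\mathbb{E}\{\|\boldsymbol{\Omega}^{(U)}\|^2\} = \mu(0,U)$ and $\mathbb{E}\{\|\boldsymbol{\Omega}^{(\rm inf)}\|^2\} = \mu(0, 2\hat{G}+U) - \mu(0,U)$.

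With these in hand the four claims are short. For \eqref{eqn_revLemma_3}, note $\sum_{u} \mathbb{E}\{\Phi_u \Omega_u^{*}\} = \mathbb{E}\{{\boldsymbol{\Omega}^{(U)}}^{\dag}\boldsymbol{\Phi}^{(U)}\} = \mathbb{E}\{{\boldsymbol{\Omega}^{(U)}}^{\dag}\mathbf{M}\boldsymbol{\Omega}^{(\rm inf)}\} - \chi\, \mathbb{E}\{\|\boldsymbol{\Omega}^{(U)}\|^2\}$, which vanishes because $\chi$ is defined precisely as $\mathbb{E}\{{\boldsymbol{\Omega}^{(U)}}^{\dag}\mathbf{M}\boldsymbol{\Omega}^{(\rm inf)}\}/\mu(0,U)$ and $\mathbb{E}\{\|\boldsymbol{\Omega}^{(U)}\|^2\} = \mu(0,U)$. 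For \eqref{eqn_revLemma_1}, I would apply the algebraic identity to write the left side as $\|\boldsymbol{\Omega}^{(U)} + \mathbf{M}\boldsymbol{\Omega}^{(\rm inf)}\|^2$, bound it by $(\|\boldsymbol{\Omega}^{(U)}\| + \sqrt{\Upsilon_p}\,\|\boldsymbol{\Omega}^{(\rm inf)}\|)^2$ via the triangle inequality and the operator-norm bound, then invoke Cauchy--Schwarz on the pair $(1,\sqrt{\Upsilon_p})$ and $(\|\boldsymbol{\Omega}^{(U)}\|, \|\boldsymbol{\Omega}^{(\rm inf)}\|)$ and finally $\|\boldsymbol{\Omega}^{(U)}\|^2 + \|\boldsymbol{\Omega}^{(\rm inf)}\|^2 \le 1$ to reach $1 + \Upsilon_p$.

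Finally, for \eqref{eqn_revLemma_2} I would expand $\mathbb{E}\{\|\boldsymbol{\Phi}^{(U)}\|^2\} = \mathbb{E}\{\|\mathbf{M}\boldsymbol{\Omega}^{(\rm inf)}\|^2\} - 2\,\mathrm{Re}\{\ldots\} + |\chi|^2 \mathbb{E}\{\|\boldsymbol{\Omega}^{(U)}\|^2\}$; the cross term again evaluates to $|\chi|^2\mu(0,U)$ by the definition of $\chi$, collapsing the expression to the exact equality $\mathbb{E}\{\|\boldsymbol{\Phi}^{(U)}\|^2\} = \mathbb{E}\{\|\mathbf{M}\boldsymbol{\Omega}^{(\rm inf)}\|^2\} - |\chi|^2\mu(0,U)$. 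Bounding $\mathbb{E}\{\|\mathbf{M}\boldsymbol{\Omega}^{(\rm inf)}\|^2\} \le \Upsilon_p \mathbb{E}\{\|\boldsymbol{\Omega}^{(\rm inf)}\|^2\} = \Upsilon_p[\mu(0,2\hat{G}+U)-\mu(0,U)]$ then gives \eqref{eqn_revLemma_2}, and dropping the non-negative left side of this exact equality, i.e. using $\mathbb{E}\{\|\boldsymbol{\Phi}^{(U)}\|^2\} \ge 0$, immediately yields \eqref{eqn_revLemma_4}. The only real obstacle is bookkeeping: carefully confirming the support and disjointness claims for $\boldsymbol{\Omega}^{(U)}$ and $\boldsymbol{\Omega}^{(\rm inf)}$ so that the Parseval and $\mu(\cdot,\cdot)$ identities apply, and verifying the operator-norm identity for $\Upsilon_p$; once these are pinned down, the inequalities are routine applications of the triangle inequality, Cauchy--Schwarz, and non-negativity of a variance-like quantity.
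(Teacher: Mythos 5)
Your proposal is correct and follows essentially the same route as the paper's proof: the pathwise identity $(1+\chi)\boldsymbol{\Omega}^{(U)}+\boldsymbol{\Phi}^{(U)}=\boldsymbol{\Omega}^{(U)}+\mathbf{M}\boldsymbol{\Omega}^{(\rm inf)}$, the spectral-norm reading of $\Upsilon_p$, the Parseval identity from Remark \ref{Lemma_PhN_properties}, and the cancellation forced by the definition of $\chi$ are exactly the ingredients used there. The only cosmetic difference is that you close \eqref{eqn_revLemma_1} with Cauchy--Schwarz on $(1,\sqrt{\Upsilon_p})$ where the paper maximizes over $\sum_{u}|\Omega_u|^2$, which yields the identical bound.
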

\begin{proof}
For \eqref{eqn_revLemma_1}, we use:
\begin{flalign}
& \sum_{u \in \mathcal{U}} {| (1 + \chi) \Omega_u + \Phi_u|}^2 = \sum_{u \in \mathcal{U}} {| \Omega_u + (\Phi_u + \chi \Omega_u)|}^2 & \nonumber \\
& \qquad \stackrel{(1)}{\leq}  {\Bigg[ \sqrt{\sum_{u \in \mathcal{U}} {|\Omega_u|}^2} + \sqrt{\sum_{u \in \mathcal{U}} {|\Phi_u + \chi \Omega_u |}^2 } \Bigg]}^2 & \nonumber \\
& \qquad \stackrel{(2)}{=} {\Big[ {\big\| \boldsymbol{\Omega}^{(U)} \big\|} + { \big\| \mathbf{R}_{p}^{-1} {\mathbf{P}^{(U)}}^{\dag} \mathbf{Q}^{(U)} \boldsymbol{\Omega}^{(\rm inf)} \big\| } \Big]}^2 & \nonumber \\
& \qquad \stackrel{(3)}{\leq} {\Bigg[\sqrt{\sum_{u \in \mathcal{U}} {|\Omega_u|}^2 } + \sqrt{ \Upsilon_p {\big\| \boldsymbol{\Omega}^{(\rm inf)} \big\|}^2} \Bigg]}^2 & \nonumber \\
& \qquad \stackrel{(4)}{\leq} {\Bigg[\sqrt{\sum_{u \in \mathcal{U}} {|\Omega_u|}^2 } + \sqrt{ \Upsilon_p \Big(1 - \sum_{u \in \mathcal{U}} {|\Omega_u|}^2 \Big)  } \Bigg]}^2 & \nonumber \\
& \qquad \stackrel{(5)}{\leq} 1 + \Upsilon_p , \nonumber &
\end{flalign}
where ${\scriptstyle\stackrel{(1)}{\leq}}$ follows from the Triangle inequality, ${\scriptstyle\stackrel{(2)}{=}}$ follows from the definition of $\boldsymbol{\Phi}^{(U)}$, ${\scriptstyle\stackrel{(3)}{\leq}}$ follows by defining $\Upsilon_{p}$ as the largest singular value, ${\scriptstyle\stackrel{(4)}{\leq}}$ follows from \eqref{eqn_lemma_1}, and ${\scriptstyle\stackrel{(5)}{\leq}}$ follows by optimizing $\sum_u {|\Omega_u|}^2$. Similarly for \eqref{eqn_revLemma_2} we have:
\begin{flalign}
& \sum_{u \in \mathcal{U}} \mathbb{E} \left\{ {| \Phi_u|}^2 \right\} \stackrel{(6)}{=} \mathbb{E} \Big\{ {\big\| \mathbf{R}_{p}^{-1} {\mathbf{P}^{(U)}}^{\dag} \mathbf{Q}^{(U)} \boldsymbol{\Omega}^{(\rm inf)} \big\|}^2 & \nonumber \\
& \qquad \qquad \qquad \qquad - {|\chi|}^2 {\big\| \boldsymbol{\Omega}^{(u)} \big\|}^2 \Big\} & \nonumber \\
& \qquad \qquad \qquad \ \stackrel{(7)}{\leq} \mathbb{E} \left\{ \Upsilon_p {\big\| \boldsymbol{\Omega}^{(\rm inf)} \big\| }^2 - {|\chi|}^2 { \big\| \boldsymbol{\Omega}^{(u)} \big\| }^2\right\} & \nonumber \\
& \qquad \qquad \qquad \ \stackrel{(8)}{=} \Upsilon_p \mu(0, 2\hat{G}+U) - \big(\Upsilon_{p} + {|\chi|}^2 \big) \mu(0, U) , \nonumber &
\end{flalign}
where ${\scriptstyle\stackrel{(6)}{=}}$, ${\scriptstyle\stackrel{(7)}{\leq}}$, ${\scriptstyle\stackrel{(8)}{=}}$ follow from the definitions of $\chi$, $\Upsilon_p$ and $\mu(\cdot)$, respectively. Note that \eqref{eqn_revLemma_4} is a direct consequence of \eqref{eqn_revLemma_2} in conjunction with $\sum_{u \in \mathcal{U}} \mathbb{E} \left\{ {| \Phi_u|}^2 \right\} \geq 0$. Finally for \eqref{eqn_revLemma_3} we have:
\begin{align}
\sum_{u \in \mathcal{U}} \mathbb{E}  \Big\{ \Phi_u \Omega_u^{*}  \Big\} = & \mathbb{E} \Big\{ {\boldsymbol{\Omega}^{(U)} }^{\dag} \mathbf{R}_{p}^{-1} {\mathbf{P}^{(U)}}^{\dag} \mathbf{Q}^{(U)} \boldsymbol{\Omega}^{(\rm inf)} \nonumber \\
& - \chi {\boldsymbol{\Omega}^{(U)} }^{\dag} \boldsymbol{\Omega}^{(U)} \Big \} & \nonumber \\
 = & 0, \nonumber
\end{align}
where the last step follows from the definition of $\chi$.
\end{proof}

\subsection{Phase noise compensation}
To compensate for the PhN-induced phase rotation and ICI in \eqref{eqn_Yk}, a simple PhN compensation technique is considered, where the post-compensation $k$-th OFDM output can be expressed as:
\begin{eqnarray} \label{eqn_Yk_with_PhN_comp}
\hat{Y}_k &=& \sum_{u \in \mathcal{U}} {\big[\widehat{\beta_{0} \boldsymbol{\Omega}^{(U)}}\big]}_{u+U+1}^{*} Y_{k+u} \nonumber \\
& \stackrel{(1)}{=}& \sum_{u \in \mathcal{U}} [\beta^{*}_{0} (1+\chi^{*})\Omega_{u}^{*} + \beta^{*}_{0} \Phi_{u}^{*} + \hat{W}^{*}_{u}] \bigg[ W_{k+u} \nonumber \\
&& + \sum_{g \in \hat{\mathcal{G}}} \beta_{0} p_{g} \Omega_{k+u-g} + \!\!\! \sum_{\bar{k} \in \mathcal{K} \setminus \mathcal{G}} \!\! \beta_{\bar{k}} x_{\bar{k}} \Omega_{k+u-\bar{k}} \bigg], 
\end{eqnarray}
and ${\scriptstyle\stackrel{(1)}{=}}$ follows from \eqref{eqn_PhN_est_LS}. These PhN compensated sub-carriers $\{\hat{Y}_k | k \in \mathcal{K} \setminus \mathcal{G}\}$ are then used to demodulate the data signals $x_k$. Note that CPE-only compensation \cite{Robertson1995, Abhayawardhanaa2002} is a special case of \eqref{eqn_Yk_with_PhN_comp}, obtained by picking $U=0$. Using \eqref{eqn_lemma_1}, it can be shown that the above technique can completely cancel the PhN in the absence of estimation noise $\hat{W}_{u}$ and for $U \gg 1$ (albeit at a very high pilot overhead). These demodulated outputs for more general system settings are analyzed in the next section. From \eqref{eqn_PhN_est_LS} and \eqref{eqn_Yk_with_PhN_comp}, it can also be verified that the proposed PhN estimation and compensation technique only requires a small overhead of $\leq (2U+1)K$ complex multiplications per OFDM symbol.

\section{Analysis of the demodulated outputs} \label{sec_anal_output} 
We shall split $\hat{Y}_k$ in \eqref{eqn_Yk_with_PhN_comp} into three components as: $\hat{Y}_k = \hat{S}_k + \hat{I}_k + \hat{Z}_k$. The first component $\hat{S}_k$, referred to as the signal component, involves the terms in \eqref{eqn_Yk_with_PhN_comp} containing $x_k$ and not containing the channel noise, PhN estimation errors or interference. The second component $\hat{I}_k$, referred to as the interference component, involves the terms containing $\{p_{g}, x_{\bar{k}} | g \in \hat{\mathcal{G}}, \bar{k} \in \mathcal{K} \setminus \{k\}\}$ or estimation errors in $\widehat{\beta_{0} \boldsymbol{\Omega}^{(U)}}$ and not containing the channel or estimation noise. The third component $\hat{Z}_k$, referred to as the noise component, contains the remaining terms. These signal, interference and noise components are analyzed in the following subsections. 

\subsection{Signal component analysis} \label{subsec_signal_comp_anal}
From \eqref{eqn_Yk_with_PhN_comp}, the signal component for $k \in \mathcal{K} \setminus \mathcal{G}$ can be expressed as:
\begin{eqnarray} \label{eqn_Sk}
\hat{S}_k &=& \sum_{u \in \mathcal{U}} \beta^{*}_{0} \beta_{k} x_{k} (1 + \chi^{*} ) \mathbb{E} \big\{ {|\Omega_{u}|}^2 \big\}. 
\end{eqnarray}
Note that since the coefficient $\sum_{u \in \mathcal{U}} {|\Omega_{u}|}^2$ can be unknown and random for each symbol, we only consider its statistical mean, viz. $\mu(0,U)$, to contribute to the signal component. Thus the signal component is independent of the instantaneous realization of $\{\Omega_u | u \in \mathcal{U}\}$. As is evident, the phase rotation due the PhN is suppressed by the compensation technique and the magnitude of signal component increases with $U$. Taking an expectation with respect to $x_k$, the energy of the signal component can be expressed as:
\begin{eqnarray} \label{eqn_Sk_stats}
\mathbb{E}\{ {|\hat{S}_k |}^2 \} &=& {|\beta_{0} \beta_{k}|}^2 E_{\rm d} {|1+\chi|}^2 {\mu(0,U)}^2.
\end{eqnarray}

\subsection{Interference component analysis} \label{subsec_interf_comp_anal}
From \eqref{eqn_Yk_with_PhN_comp}, the interference component for $k \in \mathcal{K} \setminus \mathcal{G}$ can be expressed as: $\hat{I}_k = \hat{I}_k^{(1)} + \hat{I}_k^{(2)}$, where:
\begin{subequations} \label{eqn_Ik}
\begin{flalign}
& \hat{I}^{(1)}_k = \sum_{u \in \mathcal{U}} \sum_{g \in \hat{\mathcal{G}}} {|\beta_{0}|}^2 \big[(1+\chi^{*})\Omega_{u}^{*} + \Phi_{u}^{*} \big] p_{g} \Omega_{k+u-g} & \\
& \hat{I}_k^{(2)} = \sum_{u \in \mathcal{U}} \sum_{\bar{k} \in \mathcal{K} \setminus [\mathcal{G} \cup \{k\}]} \!\!\!\!\! \beta_0^{*} \beta_{\bar{k}} x_{\bar{k}} \big[(1+\chi^{*})\Omega_{u}^{*} + \Phi_{u}^{*} \big] \Omega_{k+u-\bar{k}} & \nonumber \\
& \quad \qquad + \beta_0^{*} \beta_{k} x_{k} \Big[ \sum_{u \in \mathcal{U}} \Big((1+\chi^{*})\Omega_u^{*} + \Phi_{u}^{*}\Big) \Omega_{u} & \nonumber \\
& \quad \qquad - (1+\chi^{*})\mu(0,U) \Big]. & \label{eqn_Ik_2}
\end{flalign}
\end{subequations}
Note that the last term $(1+\chi^{*}) \mu(0,U)$ in \eqref{eqn_Ik_2} subtracts out the contribution of the signal component \eqref{eqn_Sk}. Using the independent, zero mean assumption on the sub-carrier data, the first and second moment of $\hat{I}_k$, averaged over the PhN and data can be expressed as: 
\begin{subequations} \label{eqn_Ik_stats}
\begin{align}
\mathbb{E}\{\hat{I}_k \} &= \sum_{{u} \in \mathcal{U}} \sum_{{k} \in \hat{\mathcal{G}}} {|\beta_{0}|}^2 p_{{k}}(1+\chi^{*}) \Delta_{k+{u}-{k}, {u}} \\
\mathbb{E}\{{|\hat{I}_k|}^2\} &= \mathbb{E}\{{|\hat{I}^{(1)}_k|}^2\} + \mathbb{E}\{{|\hat{I}^{(2)}_k|}^2\}. \label{eqn_Ik_stats_2}
\end{align}
The terms in \eqref{eqn_Ik_stats_2} can further be bounded as:
\begin{flalign} 
& \mathbb{E}\{{|\hat{I}^{(1)}_k|}^2\} & \nonumber \\
&= \mathbb{E} {\bigg| \sum_{u \in \mathcal{U}} \Big[(1+\chi^{*})\Omega_{u}^{*} + \Phi_{u}^{*} \Big] \Big[ \sum_{g \in \hat{\mathcal{G}}} p_{g} {|\beta_{0}|}^2 \Omega_{k+u-g} \Big] \bigg|}^2 & \nonumber \\
& \stackrel{(1)}{\leq} \mathbb{E} \bigg\{ \Big[ \sum_{u \in \mathcal{U}}  {\big|(1+\chi)\Omega_{u} + \Phi_{u}\big|}^2 \Big] & \nonumber \\
& \qquad \qquad \times \Big[ \sum_{u \in \mathcal{U}} {\Big| \sum_{g \in \hat{\mathcal{G}}} p_{g} {|\beta_{0}|}^2 \Omega_{k+u-g} \Big|}^2 \Big] \bigg\} & \nonumber \\
& \stackrel{(2)}{\leq} (1 + \Upsilon_p) \mathbb{E}\Big[ \sum_{u \in \mathcal{U}} {\Big| \sum_{g \in \hat{\mathcal{G}}} p_{g} {|\beta_{0}|}^2 \Omega_{k+u-g} \Big|}^2 \Big] & \nonumber \\
& \stackrel{(3)}{\leq} \sum_{u \in \mathcal{U}} (1 + \Upsilon_p) {|\beta_{0}|}^4 E_{\rm r} \Big[ \sum_{g \in \hat{\mathcal{G}}} \Delta_{k+u-g, k+u-g} \Big] & \nonumber \\
& = \sum_{g \in \hat{\mathcal{G}}} (1 + \Upsilon_p) E_{\rm r} {|\beta_{0}|}^4 \mu(k-g, U) & \label{eqn_I1k_stats} \\
& \mathbb{E}\{{|\hat{I}^{(2)}_k|}^2\} & \nonumber \\
& \stackrel{(4)}{\leq} \!\!\!\! \sum_{\bar{k} \in \mathcal{K} \setminus \{k\}} \!\!\!\!\! {|\beta_{0} {\beta}_{\bar{k}}|}^2 E_{\rm d} \mathbb{E} {\bigg| \sum_{u \in \mathcal{U}} \big[ (1+\chi^{*})\Omega_{u}^{*} + \Phi_{u}^{*} \big] \Omega_{k+u-\bar{k}} \bigg|}^2 & \nonumber \\
& \qquad \qquad + {|\beta_{0} {\beta}_{k}|}^2 E_{\rm d} \mathbb{E} \bigg| \sum_{u \in \mathcal{U}} \big[ (1+\chi^{*})\Omega_{u}^{*} + \Phi_{u}^{*} \big] \Omega_{u} & \nonumber \\
& \qquad \qquad \qquad \qquad {- (1+\chi^{*})\mu(0,U) \bigg|}^2 & \nonumber \\
& \stackrel{(5)}{\leq}  \sum_{\bar{u},\ddot{u} \in \mathcal{U}} \!\!\! {|\beta_0 \bar{\beta}|}^2 E_{\rm d} \mathbb{E} \bigg\{ \big[(1+\chi^{*})\Omega_{\bar{u}}^{*} + \Phi_{\bar{u}}^{*} \big] \big[(1+\chi)\Omega_{\ddot{u}}+\Phi_{\ddot{u}} \big]& \nonumber \\
& \quad \times \Big[ \sum_{\bar{k} \in \mathcal{K}} \Omega_{k+\bar{u}-\bar{k}} \Omega_{k+\ddot{u}-\bar{k}}^{*} \Big] \bigg\} - {|\beta_0 \bar{\beta}|}^2 E_{\rm d} {|1+\chi|}^2 {\mu(0,U)}^2 & \nonumber \\
& \stackrel{(6)}{=} \sum_{\bar{u} \in \mathcal{U}} {|\beta_0 \bar{\beta}|}^2 E_{\rm d} \mathbb{E} \big\{ {|(1+\chi)\Omega_{\bar{u}}+\Phi_{\bar{u}}|}^2 \big\} & \nonumber \\
& \qquad \qquad - {|\beta_0 \bar{\beta}|}^2 E_{\rm d} {|1+\chi|}^2 {\mu(0,U)}^2 & \nonumber \\
& \stackrel{(7)}{\leq} {|\beta_0 \bar{\beta}|}^2 E_{\rm d} \Big[ (1 + \chi+\chi^{*})\mu(0,U) - {|1+\chi|}^2 {\mu(0,U)}^2 & \nonumber \\
& \qquad \qquad + \Upsilon_p \big( \mu(0, 2\hat{G}+U) - \mu(0, U) \big) \Big] , \!\!\!\!\!\! & \label{eqn_I2k_stats}
\end{flalign}
\end{subequations}
where ${\scriptstyle \stackrel{(1)}{\leq}}$ follows by using the Cauchy-Schwartz inequality; ${\scriptstyle \stackrel{(2)}{\leq}}$ follows by using \eqref{eqn_revLemma_1}; ${\scriptstyle \stackrel{(3)}{\leq}}$ follows by using the Cauchy-Schwartz inequality again; ${\scriptstyle\stackrel{(4)}{\leq}}$ also follows from the independent, zero mean assumption for sub-carrier data and by including summation over $\bar{k} \in \mathcal{G}$ for the first term; ${\scriptstyle\stackrel{(5)}{\leq}}$ follows by defining $\bar{\beta} \triangleq \max_{k \in \mathcal{K}} |\beta_k|$ and using \eqref{eqn_revLemma_3} for the second term; ${\scriptstyle\stackrel{(6)}{=}}$ follows from \eqref{eqn_lemma_1}; and ${\scriptstyle\stackrel{(7)}{\leq}}$ follows from \eqref{eqn_revLemma_2}. 

\subsection{Noise component analysis} \label{subsec_noise_comp_anal}
The noise component of the received signal on sub-carrier $k \in \mathcal{K} \setminus \mathcal{G}$ can be expressed as: 
\begin{subequations} \label{eqn_Z_k_exansion}
\begin{eqnarray}
\hat{Z}_k &=& \hat{Z}_k^{(1)} + \hat{Z}_k^{(2)} + \hat{Z}_k^{(3)} + \hat{Z}_k^{(4)}, \ \text{where:} \nonumber \\ 
\hat{Z}_k^{(1)} &=& \sum_{u \in \mathcal{U}} \sum_{g \in \hat{\mathcal{G}}} \hat{W}^{*}_{u} \beta_{0} p_{g} \Omega_{k+u-g} \nonumber \\
& \stackrel{(1)}{=} & \sum_{v=-\hat{G}-U}^{\hat{G}+U} \sum_{u \in \mathcal{U}} \hat{W}^{*}_{u} \beta_{0} p_{v+u} \Omega_{k-v} \\
\hat{Z}_k^{(2)} &=& \sum_{u \in \mathcal{U}} \sum_{\bar{k} \in \mathcal{K} \setminus \mathcal{G}} \hat{W}^{*}_{u} \beta_{\bar{k}} x_{\bar{k}} \Omega_{k+u-\bar{k}} \label{eqn_Zk_term2} \\
\hat{Z}_k^{(3)} &=& \sum_{u \in \mathcal{U}} \beta_0^{*} [(1+\chi^{*})\Omega_{u}^{*} + \Phi_u^{*}] W_{k+u} \\
\hat{Z}_k^{(4)} &=& \sum_{u \in \mathcal{U}} \hat{W}^{*}_{u} W_{k+u},
\end{eqnarray}
\end{subequations}
where ${\scriptstyle\stackrel{(1)}{=}}$ is obtained by using change of variables $v = g - u$ and letting $p_a = 0$ for $|a| > \hat{G}$. From Remark \ref{Lemma_PhN_properties} and equations \eqref{eqn_Yk_ref}--\eqref{eqn_PhN_est_LS}, it can be readily verified that $\hat{W}_{u}$ and $W_k$ are circularly symmetric, zero-mean Gaussian and mutually independent for $u \in \mathcal{U}, k \in \mathcal{K} \setminus \mathcal{G}$. Additionally, they are also independent of $\{\Omega_{k} | k \in \mathcal{K} \}$. Therefore the first and second moments of the noise signal, averaged over the PhN, channel noise and data signals, can be expressed as:
\begin{subequations} \label{eqn_Z_k_stats}
\begin{eqnarray}
\mathbb{E}\{\hat{Z}_k\} &=& 0 \\
\mathbb{E}\{ {|\hat{Z}_k|}^2 \} &=& \sum_{i=1}^{4} \mathbb{E}\{ {|\hat{Z}^{(i)}_k|}^2 \}. 
\end{eqnarray}
Using $\diamond$ as short hand for $U+1$ to save space, these individual moments can be expressed as:
\begin{flalign}
& \mathbb{E}\{ {|\hat{Z}^{(1)}_k|}^2 \} \stackrel{(1)}{\leq} \Bigg[ \sum_{v=-\hat{G}-U}^{\hat{G}+U} \!\!\!\!\! \Delta_{k-v, k-v} \Bigg] & \nonumber \\
& \qquad \qquad \qquad \times \Bigg[ \sum_{v=-\hat{G}-U}^{\hat{G}+U} \mathbb{E}{\Bigg|\sum_{u \in \mathcal{U}} \hat{W}^{*}_{\bar{u}} \beta_{0} p_{v+\bar{u}} \Bigg|}^2 \Bigg] & \nonumber \\
& \quad \stackrel{(2)}{=} \mu(k, \hat{G}+U) \bigg[ \sum_{v=-\hat{G}-U}^{\hat{G}+U} \sum_{\bar{u},\ddot{u} \in \mathcal{U}} {|\beta_0|}^2 \mathrm{N}_0 & \nonumber \\
& \qquad \quad \times [\mathbf{R}^{-1}_p]_{\diamond +\ddot{u}, \diamond +\bar{u}} p_{v+\bar{u}} p^{*}_{v+\ddot{u}} \bigg] & \nonumber \\
& \quad \stackrel{(3)}{=} \mu(k, \hat{G}+U) \Bigg[ \sum_{\bar{u},\ddot{u} \in \mathcal{U}} \!\!\! {|\beta_0|}^2 \mathrm{N}_0 [\mathbf{R}^{-1}_p]_{\diamond +\ddot{u}, \diamond +\bar{u}} & \nonumber \\
& \qquad \quad \bigg( \sum_{v = -\hat{G}-U}^{\hat{G} + U} \!\!\!\!\! p_{v+\bar{u}} p^{*}_{v+\ddot{u}} \bigg) \Bigg] & \nonumber \\
& \quad \stackrel{(4)}{=} \mu(k, \hat{G}+U) \Big[ \sum_{\bar{u},\ddot{u} \in \mathcal{U}} {|\beta_0|}^2 \mathrm{N}_0 [\mathbf{R}^{-1}_p]_{\diamond +\ddot{u}, \diamond +\bar{u}} & \nonumber \\
& \qquad \quad [\mathbf{R}_p]_{\diamond +\bar{u}, \diamond +\ddot{u}} \Big] & \nonumber \\
& \quad = \mu(k, \hat{G}+U) |\mathcal{U}| {|\beta_0|}^2 \mathrm{N}_0 & \\
& \mathbb{E}\{ {|\hat{Z}^{(2)}_k|}^2 \} = \sum_{\bar{k} \in \mathcal{K} \setminus \mathcal{G}} {|\beta_{\bar{k}}|}^2 E_{\rm d} \mathbb{E}{\Big| \sum_{u \in \mathcal{U}} \hat{W}^{*}_{u} \Omega_{k+u-\bar{k}} \Big|}^2 & \nonumber \\
& \quad \stackrel{(5)}{\leq} \sum_{\bar{k} \in \mathcal{K}} {|\bar{\beta}|}^2 E_{\rm d} \mathbb{E}{\Big[ \sum_{\bar{u}, \ddot{u} \in \mathcal{U}} \hat{W}^{*}_{\bar{u}} \hat{W}_{\ddot{u}} \Omega_{k+\bar{u}-\bar{k}} \Omega_{k+\ddot{u}-\bar{k}}^{*} \Big]} & \nonumber \\
& \quad = \sum_{\bar{u},\ddot{u} \in \mathcal{U}} {[\mathbf{R}^{-1}_{p}]}_{\diamond + \ddot{u}, \diamond + \bar{u}} {\mathrm{N}_0} {|\bar{\beta}|}^2 E_{\rm d} \mathbb{E} \Big[ \sum_{\bar{k} \in \mathcal{K}} \Omega_{k+\bar{u}-\bar{k}} \Omega_{k+\ddot{u}-\bar{k}}^{*} \Big] & \nonumber \\
& \quad \stackrel{(6)}{=} {\rm Tr}\{ \mathbf{R}^{-1}_{p} \} {\mathrm{N}_0} {|\bar{\beta}|}^2 E_{\rm d} & \\
& \mathbb{E}\{{|\hat{Z}_k^{(3)}|}^2 \} \stackrel{(7)}{=} \sum_{u \in \mathcal{U}} {|\beta_0|}^2 \mathbb{E} \big\{{|(1+\chi^{*}) \Omega_u + \Phi_u|}^2 \big\} \mathrm{N}_0 & \nonumber \\
& \quad \stackrel{(8)}{\leq} {|\beta_0|}^2 \Big[ \big(1 + \chi + \chi^{*} \big) \mu(0,U) & \nonumber \\
& \qquad \quad + \Upsilon_p \big(\mu(0, 2\hat{G}+U) -\mu(0,U)\big) \Big]\mathrm{N}_0  & \\
& \mathbb{E}\{{|\hat{Z}_k^{(4)}|}^2 \} \stackrel{(9)}{=} \sum_{u \in \mathcal{U}} {[\mathbf{R}^{-1}_p]}_{\diamond +u, \diamond +u} {\mathrm{N}_0}^2 & \nonumber \\
& \quad = {\rm Tr}\{ \mathbf{R}^{-1}_{p} \} {\mathrm{N}_0}^2, & \!\!\!\!\!\!\!\!
\end{flalign}
\end{subequations}
where ${\scriptstyle\stackrel{(1)}{\leq}}$ follows from the Cauchy Schwartz inequality; ${\scriptstyle\stackrel{(2)}{=}}$ follows by using the second moment of $\hat{\mathbf{W}}^{(U)}$ in \eqref{eqn_PhN_est_LS}; 
${\scriptstyle\stackrel{(3)}{=}}$ follows by changing the order of summation of $v, \dot{u}, \ddot{u}$; ${\scriptstyle\stackrel{(4)}{=}}$ follows from the definition of $\mathbf{R}_{p}$; ${\scriptstyle\stackrel{(5)}{\leq}}$ by using $\bar{\beta} \triangleq \max_{k \in \mathcal{K}} |\beta_k|$ and increasing the summation range for $\bar{k}$; ${\scriptstyle\stackrel{(6)}{=}}$ follows from \eqref{eqn_lemma_1}; ${\scriptstyle\stackrel{(7)}{=}}$ follows from definition of $\mu(\cdot)$ in \eqref{eqn_Sk_stats}; ${\scriptstyle\stackrel{(8)}{=}}$ follows from \eqref{eqn_revLemma_2}, \eqref{eqn_revLemma_3} and ${\scriptstyle\stackrel{(9)}{=}}$ follows by observing that $\hat{W}_{u}$ and $W_{k+u}$ are independent for $k \in \mathcal{K} \setminus \mathcal{G}$. 

\section{Performance Analysis} \label{sec_perf_anal}
From the analysis in the previous section, the effective channel between the $k$-th OFDM input and the $k$-th demodulated output (after PhN compensation) can be expressed as:
\begin{eqnarray} \label{eqn_equiv_channel}
\hat{Y}_k = \beta^{*}_{0} \beta_{k} x_{k} (1+\chi^{*}) \mu(0,U) + \hat{I}_k + \hat{Z}_k,
\end{eqnarray}
where the statistics of the $\hat{I}_k, \hat{Z}_k$ are discussed in Section \ref{sec_anal_output}. Note that for demodulating $x_k$ from \eqref{eqn_equiv_channel}, the RX requires estimates of the channel coefficients $\{\beta^{*}_{0} \beta_{k} (1+\chi^{*}) \mu(0,U) | k \in \mathcal{K} \}$. These coefficients can be tracked accurately at the RX using pilot symbols, at the boosed SINR after PhN compensation. For brevity, we shall assume perfect estimates of these coefficients for the analysis.\footnote{The impact of such channel estimation errors has been well explored in literature \cite{Goldsmith2004} and can be incorporated by increasing power of $\hat{Z}_k$ appropriately.} 
Noting that $\hat{S}_k, \hat{I}_k, \hat{Z}_k$ in \eqref{eqn_equiv_channel} are uncorrelated, a lower bound to SINR on sub-carrier $k$ can be obtained as:
\begin{align}
\Gamma^{\rm LB}_k(\boldsymbol{\beta}) &= \frac{{|\beta_{0} \beta_{k}|}^2 E_{\rm d} {|1+\chi|}^2 {\mu(0,U)}^2}{ \sum_{i=1}^{2} \bar{\sigma}_{\hat{I}^{(i)}_k}^2 + \sum_{j=1}^{4} \bar{\sigma}_{\hat{Z}^{(j)}_k}^2 }, \label{eqn_SINR_LB}
\end{align}
where $\bar{\sigma}_{\hat{I}^{(i)}_k}^2, \bar{\sigma}_{\hat{Z}^{(j)}_k}^2$ are the upper bounds in \eqref{eqn_Ik_stats} and \eqref{eqn_Z_k_stats}, respectively and $\boldsymbol{\beta} \triangleq \{\beta_k | k \in \mathcal{K}\}$. Furthermore, considering independent demodulation of each sub-carrier, the achievable system throughput (conditioned on $\boldsymbol{\beta}$) can be lower bounded as:\footnote{Here the expression for ergodic capacity is used, by assuming $\boldsymbol{\beta}$ remains constant for infinite time but the PhN $\boldsymbol{\Omega}$ experiences many independent realizations. This capacity is representative of the throughput of practical channel codes that have a block length spanning multiple OFDM symbols but smaller than the coherence time of $\beta_{k}$ \cite{Foschini1998}.}
\begin{align}
C(\boldsymbol{\beta}) & \stackrel{(1)}{=} \frac{1}{K}\sum_{k \in \mathcal{K} \setminus \mathcal{G}} \bigg[ \mathscr{H}(x_k) - \mathscr{H}(x_k | \hat{Y}_k, \widehat{\beta_{0} \boldsymbol{\Omega}^{(U)}} ) \bigg] \nonumber \\
& \geq \frac{1}{K}\sum_{k \in \mathcal{K} \setminus \mathcal{G}} \bigg[ \mathscr{H}(x_k) - \mathscr{H}(x_k | \hat{Y}_k ) \bigg] \nonumber \\
& \stackrel{(2)}{\geq} \frac{1}{K}\sum_{k \in \mathcal{K} \setminus \mathcal{G}} \bigg[ \log\left[ E_{\rm d}\right] - \log\left[ E_{\rm d} - \frac{{\big|\mathbb{E}\{\hat{Y}_k x_k^{*}\} \big|}^2 }{ \mathbb{E}\{{|\hat{Y}_k|}^2\}}\right] \bigg] \nonumber \\
& \stackrel{(3)}{=} \frac{1}{K}\sum_{k \in \mathcal{K} \setminus \mathcal{G}} \log \big(1 + \Gamma_{k}^{\rm LB}(\boldsymbol{\beta}) \big) \triangleq C^{\rm LB}(\boldsymbol{\beta}), \label{eqn_cap}
\end{align}
where ${\scriptstyle\stackrel{(1)}{=}}$ follows by defining $\mathscr{H}(\cdot)$ as the differential entropy; ${\scriptstyle\stackrel{(2)}{\geq}}$ is obtained by using Gaussian signaling for $x_k$ and computing error variance of estimator $\hat{x}_k = \mathbb{E}\{\hat{Y}_k x_k^{*}\} \hat{Y}_k \big/ \mathbb{E}\{{|\hat{Y}_k|}^2\}$; and ${\scriptstyle\stackrel{(3)}{=}}$ is obtained by noting that $\hat{S}_k, \hat{I}_k, \hat{Z}_k$ in \eqref{eqn_equiv_channel} are uncorrelated. Note that the summation limits in \eqref{eqn_cap} ensure that the RS overhead is also considered in the throughput analysis, and the dependence on the TX/RX beamformers is captured in $\boldsymbol{\beta}$.

\section{Near-optimal RS design and system parameters} \label{sec_RS_design}
Although $C^{\rm LB}(\boldsymbol{\beta})$ in \eqref{eqn_cap} is a closed-form throughput lower bound, its dependence on the sub-carrier index $k$ makes it difficult to find near-optimal RS parameters. Therefore, for optimizing parameters we shall focus only on the \emph{typical} data-subcarriers for which $|k| \gg G$. Note that for these sub-carriers, using $G \geq \hat{G}+2U$, criterion C2 and assuming $E_{\rm r} \ll E_{\rm s}$, we observe that $\bar{\sigma}_{\hat{I}^{(1)}_k}^2, \bar{\sigma}_{\hat{Z}^{(1)}_k}^2$ are negligibly small.\footnote{Simulations have shown this approximation to be accurate even for $E_{\rm r}$ as high as $E_{\rm r} = E_{\rm s}/2$ and typical PhN levels.} So ignoring these two terms, \eqref{eqn_SINR_LB} and \eqref{eqn_cap} can be approximated as:
\begin{subequations} \label{eqn_aprx_SINR_cap}
\begin{align}
\Gamma^{\rm aprx}_k(\boldsymbol{\beta}) &= \frac{{|\beta_{0} \beta_{k}|}^2 E_{\rm r} (E_{\rm s} - E_{\rm r}) {|1+\chi|}^2 {\mu(0,U)}^2}{ \Xi(\boldsymbol{\beta})} \label{eqn_SINR_aprx} \\
C^{\rm aprx}(\boldsymbol{\beta}) &= \frac{1}{K}\sum_{k \in \mathcal{K} \setminus \mathcal{G}} \log \big(1 + \Gamma_{k}^{\rm aprx}(\boldsymbol{\beta}) \big) \label{eqn_cap_LB2}
\end{align}
where:
\begin{align}
\Xi(\boldsymbol{\beta}) \triangleq & {|\beta_0|}^2 E_{\rm r} \Big[ {|\bar{\beta}|}^2 (E_{\rm s} - E_{\rm r}) + (K-|\mathcal{G}|) \mathrm{N}_0 \Big] \Big[ \mu(0,U) \nonumber \\
& \times (1+\chi+\chi^{*}) + {\Upsilon}_p \big[ \mu(0, 2\hat{G}+U) - \mu(0,U)\big] \Big] \nonumber \\
& - {|\beta_0 \bar{\beta}|}^2 {|1+\chi|}^2 {\mu(0,U)}^2 (E_{\rm s}-E_{\rm r}) E_{\rm r} \nonumber \\
& + {\rm Tr}\{ E_{\rm r} \mathbf{R}^{-1}_{p} \} {\mathrm{N}_0} \big[ {|\bar{\beta}|}^2 (E_{\rm s}-E_{\rm r}) + (K-|\mathcal{G}|)\mathrm{N}_0 \big] . \label{eqn_xi_def}
\end{align}
\end{subequations}
In this section, we shall analyze the impact of the 5 RS parameters: $G, \hat{G}, \{p_{-\hat{G}},.., p_{\hat{G}}\}, E_{\rm r}$ and $U$ on $C^{\rm aprx}(\boldsymbol{\beta})$, and find the $C^{\rm aprx}(\boldsymbol{\beta})$-maximizing designs of each of them, while the remaining 4 parameters are held constant. The 5 parameters can then be jointly optimized by simply iterating through these conditionally optimal solutions. In practice, however, several of these parameters may be pre-determined constants, thereby obviating the need to iterate through them. 
\subsubsection{Worst case $\chi$}
From \eqref{eqn_chi_defn}, note that $\chi$ is a statistical parameter whose value may be unknown. So here we conservatively use the worst case value of $\chi$ that maximizes $\Xi(\boldsymbol{\beta}) / {|1+\chi|}^2$ (and thus minimizes $C^{\rm aprx}(\boldsymbol{\beta})$) while satisfying constraint \eqref{eqn_revLemma_4}. Note that from \eqref{eqn_xi_def} this term can be expressed in the form:
$$ \Xi(\boldsymbol{\beta}) \big/ {|1+\chi|}^2 = a + (b - c {|\chi|}^2) \big/ {|1+\chi|}^2,$$
where $a,b,c$ are constants independent of $\chi$ and $b- c {|\chi|}^2 > 0$ from \eqref{eqn_revLemma_4}. It can readily be verified that for a given $|\chi|$, this term is maximized when $\chi$ is real and negative. Since the worst-case solution is either a boundary point, singular point or a stationary point of $\Xi(\boldsymbol{\beta}) \big/ {|1+\chi|}^2$, the worst-case choice is the one among the following two solutions which yields a larger value of $\Xi(\boldsymbol{\beta}) \big/ {|1+\chi|}^2$:
\begin{subequations}
\begin{align}
\chi &= \max \Bigg\{- \sqrt{\frac{{\Upsilon}_p [\mu(0, 2\hat{G}+U) - \mu(0,U)]}{\mu(0, U)} }, -1 \Bigg\}, \label{eqn_worst_chi_1}\\
\chi &= \max \Bigg\{- \sqrt{\frac{{\Upsilon}_p [\mu(0, 2\hat{G}+U) - \mu(0,U)]}{\mu(0, U)} }, \nonumber \\
& - \frac{ {\rm Tr}\{ \mathbf{R}^{-1}_{p} \} {\mathrm{N}_0}}{{|\beta_0|}^2 \mu(0,U) } - \frac{ {\Upsilon}_p \big[ \mu(0, 2\hat{G}+U) - \mu(0,U)\big] }{ \mu(0,U) } \Bigg\}. \label{eqn_worst_chi}
\end{align}
\end{subequations}
Here \eqref{eqn_worst_chi_1} results from \eqref{eqn_revLemma_4} and noting that $\Xi(\boldsymbol{\beta}) / {|1+\chi|}^2$ is not analytic at $\chi = -1$ and \eqref{eqn_worst_chi} is from \eqref{eqn_revLemma_4} and the stationary point with $\frac{\partial [ \Xi(\boldsymbol{\beta}) / {|1+\chi|}^2 ]}{ \partial \chi} = 0$. 
\subsubsection{Optimizing $G$}
Using $|\mathcal{G}|=2G+1$, note that $C^{\rm aprx}(\boldsymbol{\beta})$ can be expressed in the form: 
$$C^{\rm aprx}(\boldsymbol{\beta}) = \sum_{k \in \mathcal{K}, |k| > G} \log \Big[1+\frac{\beta_k}{a + b (K-2G-1)} \Big],$$ 
where $a$ and $b$ are positive terms independent of $G$. 
Except in pathological cases where $\beta_{k} \ll \bar{\beta}$ for $k \in \mathcal{G}$ (a deep fade), it can be verified that $C^{\rm aprx}(\boldsymbol{\beta})$ is a strictly decreasing function of $G$. Thus, the throughput optimal value of $G$ is $G^{\rm opt} = \hat{G}+2U$ viz., its lowest allowed value from Section \ref{subsec_PhN_est}. In other words, a lower $G$ (guard band size) is preferred to improve the spectral efficiency. However a minimum size is required to limit the ICI during the PhN estimation process. 
\subsubsection{Optimizing the RS sequence}
We then have the following theorem:
\begin{theorem} \label{Th_opt_RS_design}
For the worst case $\chi$, $C^{\rm aprx}(\boldsymbol{\beta})$ is maximized by an RS with $\mathscr{R}_{p}(u) = 0$ for $u \in \{-2U,...,2U\}\setminus\{0\}$, where $\mathscr{R}_{p}(a) \triangleq \sum_{g = -\hat{G}}^{\hat{G}} p_g p^{*}_{g+a}$, is the aperiodic auto-correlation function of the RS sequence. 
\end{theorem}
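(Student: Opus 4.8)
The plan is to fix $E_{\rm r}$, $G$, $\hat{G}$, and $U$ and show that, among all admissible RS sequences, $C^{\rm aprx}(\boldsymbol{\beta})$ depends on $\{p_g\}$ only through the two scalar functionals $\Upsilon_p$ and ${\rm Tr}\{\mathbf{R}_p^{-1}\}$, and is monotone in each. First I would divide the numerator and denominator of $\Gamma_k^{\rm aprx}(\boldsymbol{\beta})$ in \eqref{eqn_SINR_aprx} by $|1+\chi|^2$, so that maximizing $C^{\rm aprx}(\boldsymbol{\beta})$ becomes equivalent to minimizing $\Xi(\boldsymbol{\beta})/|1+\chi|^2$, which was already cast in the form $a + (b - c|\chi|^2)/|1+\chi|^2$. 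Using $1+\chi+\chi^{*} = |1+\chi|^2 - |\chi|^2$ in \eqref{eqn_xi_def} to isolate the $\chi$-dependence, the terms $a$ and $c$ are seen to be positive and independent of the RS design (given $E_{\rm r}$), while $b = D\,\Upsilon_p[\mu(0,2\hat{G}+U)-\mu(0,U)] + {\rm Tr}\{E_{\rm r}\mathbf{R}_p^{-1}\}\mathrm{N}_0[\,|\bar{\beta}|^2(E_{\rm s}-E_{\rm r})+(K-|\mathcal{G}|)\mathrm{N}_0\,]$, with $D>0$ an RS-independent constant, is strictly increasing in both $\Upsilon_p$ and ${\rm Tr}\{\mathbf{R}_p^{-1}\}$. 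Since $\max_{\chi}\,[\,a+(b-c|\chi|^2)/|1+\chi|^2\,]$ (the worst-case $\chi$) is itself increasing in $b$, the design problem reduces to jointly minimizing $\Upsilon_p$ and ${\rm Tr}\{\mathbf{R}_p^{-1}\}$ over the RS.

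The core step is the minimization of ${\rm Tr}\{\mathbf{R}_p^{-1}\}$. I would first observe that every diagonal entry of $\mathbf{R}_p = {[\mathbf{P}^{(U)}]}^{\dag}\mathbf{P}^{(U)}$ equals $\sum_{|m|\le\hat{G}}|p_m|^2 = E_{\rm r}$, because for each column index the banded-Toeplitz support of $\mathbf{P}^{(U)}$ sweeps the entire RS support $\{-\hat{G},\dots,\hat{G}\}$; hence ${\rm Tr}\{\mathbf{R}_p\} = |\mathcal{U}|E_{\rm r}$ is a constant independent of the RS design. Applying the AM--HM inequality (equivalently, convexity of $x\mapsto 1/x$) to the eigenvalues $\lambda_i$ of $\mathbf{R}_p$ then yields ${\rm Tr}\{\mathbf{R}_p^{-1}\} = \sum_i 1/\lambda_i \ge |\mathcal{U}|^2/{\rm Tr}\{\mathbf{R}_p\} = |\mathcal{U}|/E_{\rm r}$, with equality if and only if all $\lambda_i$ coincide, i.e.\ $\mathbf{R}_p = E_{\rm r}\mathbf{I}$. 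Finally, since ${[\mathbf{R}_p]}_{j_1,j_2} = \mathscr{R}_p^{*}(j_1-j_2)$ for $j_1,j_2 \in \{1,\dots,2U+1\}$, the identity $\mathbf{R}_p = E_{\rm r}\mathbf{I}$ is precisely the statement $\mathscr{R}_p(u)=0$ for $u\in\{-2U,\dots,2U\}\setminus\{0\}$, i.e.\ the claimed zero-correlation-zone condition.

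The remaining and most delicate step is to confirm that this same sequence also controls $\Upsilon_p$. Writing $\Upsilon_p$ as the squared spectral norm of $\mathbf{R}_p^{-1}{[\mathbf{P}^{(U)}]}^{\dag}\mathbf{Q}^{(U)}$ and substituting $\mathbf{R}_p = E_{\rm r}\mathbf{I}$, the entries of ${[\mathbf{P}^{(U)}]}^{\dag}\mathbf{Q}^{(U)}$ are again aperiodic autocorrelation values $\mathscr{R}_p^{*}(\ell)$, but now at lags running up to $2\hat{G}$; the zero-correlation-zone condition annihilates the lags $|\ell|\le 2U$ but not necessarily those with $2U<|\ell|\le 2\hat{G}$. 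The difficulty is therefore that a standalone minimizer of $\Upsilon_p$ would prefer \emph{additional} autocorrelation zeros, creating a tension with the trace objective whenever $\hat{G}>U$. I expect this to be the main obstacle, and I would resolve it through criterion C2: the prefactor $\mu(0,2\hat{G}+U)-\mu(0,U) = \sum_{U<|c|\le 2\hat{G}+U}\Delta_{c,c}$ multiplying $\Upsilon_p$ in $b$ is vanishingly small, so $b$ is dominated by its ${\rm Tr}\{\mathbf{R}_p^{-1}\}$ term and the optimizer is governed by the trace minimization, whose unique solution is the zero-correlation-zone sequence above.

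As a sanity check for the plan, in the boundary case $\hat{G}=U$ the condition forces $\mathscr{R}_p(u)=0$ at \emph{every} nonzero lag (all lags lie within $\{-2U,\dots,2U\}$), which simultaneously drives ${[\mathbf{P}^{(U)}]}^{\dag}\mathbf{Q}^{(U)}=\mathbf{0}$ and hence $\Upsilon_p=0$, removing the tension entirely and making the joint minimization exact. For $\hat{G}>U$ the C2-based argument supplies the near-optimality, consistent with the paper's use of good-aperiodic-correlation (Barker/ZCZ) sequences as the practical realizations of the idealized optimum.
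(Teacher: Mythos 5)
Your proposal is correct and follows the same skeleton as the paper's proof: both reduce the problem to minimizing the two RS-dependent functionals $\Upsilon_p$ and ${\rm Tr}\{\mathbf{R}_{p}^{-1}\}$ appearing in $\Xi(\boldsymbol{\beta})/{|1+\chi|}^2$, and both extract the zero-lag condition from the fact that ${\rm Tr}\{\mathbf{R}_{p}\}=(2U+1)E_{\rm r}$ is RS-independent, so that convexity of $x\mapsto 1/x$ (your AM--HM step is the paper's Jensen step) forces ${\rm Tr}\{\mathbf{R}_{p}^{-1}\}\geq (2U+1)/E_{\rm r}$ with equality iff $\mathbf{R}_{p}=E_{\rm r}\mathbb{I}_{2U+1}$, i.e.\ iff $\mathscr{R}_{p}(u)=0$ for $u\in\{-2U,\dots,2U\}\setminus\{0\}$. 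Where you genuinely diverge is the treatment of $\Upsilon_p$. The paper chains an eigenvalue-majorization bound with the Frobenius-norm bound to obtain \eqref{eqn_upsilon_bound}, expressing $\Upsilon_p$ through $\lambda_1^{\downarrow}\{\mathbf{R}_{p}^{-1}\}$ and $\max_{a\neq 0}|\mathscr{R}_{p}(a)|^2$, and then asserts that the same zero-correlation-zone condition minimizes $\Upsilon_p$ as well --- an assertion that, as you correctly identify, only controls lags $|a|\leq 2U$ and leaves $2U<|a|\leq 2\hat{G}$ untouched when $\hat{G}>U$ (the paper tacitly concedes this immediately after the theorem by naming the perfect-autocorrelation sequence as the true optimum). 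Your resolution --- that the coefficient $\mu(0,2\hat{G}+U)-\mu(0,U)$ multiplying $\Upsilon_p$ in \eqref{eqn_xi_def} is negligible under criterion C2, so the trace term governs the optimization --- is a different and more explicit way of closing that gap: it buys an honest account of why the theorem's weaker condition suffices, at the price of resting on an approximation rather than an exact joint minimization; your $\hat{G}=U$ check, where the tension disappears exactly, is consistent with both arguments. One detail you should add from the paper's version: the worst-case $\chi$ is a maximum over a feasible set whose radius grows with $\Upsilon_p$ via \eqref{eqn_revLemma_4}, so monotonicity of the worst-case objective in $\Upsilon_p$ comes from two effects --- the pointwise increase of $b$ that you argue, \emph{and} the enlargement of the constraint set --- and only the first is covered by your ``increasing in $b$'' reasoning.
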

\begin{proof}
Note that for a given $\hat{G}$, $\Xi(\boldsymbol{\beta})$ (and hence $C^{\rm aprx}(\boldsymbol{\beta})$) depends on the RS sequence $\{p_{g}| g \in \hat{\mathcal{G}}\}$ via the terms: ${\Upsilon}_p$ and ${\rm Tr}\{\mathbf{R}^{-1}_{p}\}$. From the definition of ${\Upsilon}_p$ in Lemma \ref{Th_PhN_est_interf}, we have:
\begin{align}
0 \leq \Upsilon_p & \stackrel{(1)}{\leq} \lambda_1^{\downarrow}\{\mathbf{R}_{p}^{-1}\}^2 \lambda_1^{\downarrow} \big\{ {[\mathbf{Q}^{(U)}]}^{\dag} \mathbf{P}^{(U)} {[\mathbf{P}^{(U)}]}^{\dag} \mathbf{Q}^{(U)} \big\} \nonumber \\
& \stackrel{(2)}{\leq} \sum_{i = 1}^{2U+1} \sum_{j=1}^{4\hat{G}} \lambda_1^{\downarrow}\{\mathbf{R}_{p}^{-1}\}^2 { \Big|{\big[ {\mathbf{P}^{(U)}}^{\dag} \mathbf{Q}^{(U)} \big]}_{i,j} \Big|}^2 \nonumber \\
& \stackrel{(3)}{\leq} \lambda^{\downarrow}_1\{\mathbf{R}^{-1}_{p}\}^2 4 \hat{G} (2U+1) \max_{a \neq 0} {|\mathscr{R}(a)|}^2 \label{eqn_upsilon_bound} 
\end{align}
where ${\scriptstyle\stackrel{(1)}{\leq}}$ follows from the results on eigenvalue majorization \cite[Eqn 3.20]{Bhatia}, ${\scriptstyle\stackrel{(2)}{\leq}}$ follows from the Frobenius norm bound on the matrix spectral norm and $\stackrel{(3)}{\leq}$ follows by observing that:
\begin{eqnarray}
{\big[ {\mathbf{P}^{(U)}}^{\dag} \mathbf{Q}^{(U)} \big]}_{i,j} = \left\{ \begin{array}{cc}
\mathscr{R}_{p}(-2\hat{G}-i+j) & \text{if $j \leq 2 \hat{G}$} \\
\mathscr{R}_{p}(2U\! + \! 1 \! - \!2\hat{G}-i+j) & \text{if $j > 2 \hat{G}$}. \\
\end{array} \right. \nonumber
\end{eqnarray}
On the other hand, from the definition of $\mathbf{R}_{p}$ in \eqref{eqn_PhN_est_LS} and $\mathbf{P}^{(U)}$ (see also Fig.~\ref{Fig_PhN_est_mtrx_form}), it follows that:
\begin{eqnarray}
\sum_{{i} = 1}^{2U+1} \lambda^{\downarrow}_i \{\mathbf{R}_{p}\} &=& {\rm Tr}\{\mathbf{R}_{p}\} =  (2U+1) E_{\rm r} \nonumber 
\end{eqnarray}
where $\lambda^{\downarrow}_i \{\mathbf{R}_{p}\}$ is the ${i}$-th largest eigenvalue of $\mathbf{R}_{p}$.
Consequently, we obtain the bounds:
\begin{subequations} \label{eqn_RS_theorem_1}
\begin{align}
\lambda^{\downarrow}_1\{\mathbf{R}^{-1}_{p}\} &= \frac{1}{\lambda^{\downarrow}_{2U+1}\{\mathbf{R}_{p}\}} \geq \frac{1}{E_{\rm r}} \\
{\rm Tr}\{\mathbf{R}^{-1}_{p}\} &= |\mathcal{U}| \sum_{i=1}^{2U+1} \frac{1}{|\mathcal{U}|} {[\lambda^{\downarrow}_{i}\{\mathbf{R}_{p}\}]}^{-1} \stackrel{(4)}{\geq} \frac{{2U+1}}{E_{\rm r}}, 
\end{align}
\end{subequations}
where ${\scriptstyle\stackrel{(4)}{\geq}}$ follows from convexity of $f(x) = 1/x$ for $x > 0$ and using Jensen's inequality. These bounds in \eqref{eqn_RS_theorem_1} are satisfied with equality iff $\lambda^{\downarrow}_i \{\mathbf{R}_{p}\} = E_{\rm r} \ \forall {i} \leq 2U+1$. Since $\mathbf{R}_{p}$ is also Hermitian symmetric, this implies \eqref{eqn_RS_theorem_1} is met with equality iff $\mathbf{R}_{p} = E_{\rm r} \mathbb{I}_{2U+1}$. Using \eqref{eqn_upsilon_bound}, \eqref{eqn_RS_theorem_1} and observing that ${[\mathbf{R}_{p}]}_{a,b} = \mathscr{R}(a-b)$, we conclude that both ${\Upsilon}_p$ and ${\rm Tr}\{\mathbf{R}^{-1}_{p}\}$ attain their minimum values when $\mathscr{R}_{p}(u) = 0$ for $u \in \{-2U,...,2U\}\setminus\{0\}$. Finally, noting that $\Xi(\boldsymbol{\beta})$ is an increasing function of ${\Upsilon}_p$ and ${\rm Tr}\{\mathbf{R}^{-1}_{p}\}$, and a larger ${\Upsilon}_p$ also increases the feasible values for the worst-case $\chi$ (see \eqref{eqn_revLemma_4}), the result follows.
\end{proof}
From Theorem \ref{Th_opt_RS_design}, it follows that an optimal choice for the RS is a perfect aperiodic auto-correlation sequence, satisfying $\mathscr{R}_{p}(u) = 0, \ \forall u \neq 0$. A low auto-correlation essentially reduces the power of the PhN estimation noise $\hat{\mathbf{W}}^{(U)}$ and PhN estimation interference $\boldsymbol{\Phi}^{(U)}$ in \eqref{eqn_PhN_est_LS}. While such a perfect auto-correlation is not possible when $\hat{G} > 1$, Barker sequences can provide a close approximation \cite{Golomb1965, Leukhin2018}. Several algorithms have also been proposed in literature to find sequences with near-perfect auto-correlation \cite{Stoica2009}. 
\begin{remark}
With a good choice of $U$, the effective channel \eqref{eqn_equiv_channel} may operate in a noise limited regime where $\sum_{j=1}^{4} \bar{\sigma}_{\hat{Z}^{(j)}_k}^2 > \sum_{i=1}^{2} \bar{\sigma}_{\hat{I}^{(i)}_k}^2$. In such scenarios, it may be sufficient to only minimize ${\rm Tr}\{\mathbf{R}^{-1}_{p}\}$ by having $\mathscr{R}_{p}(u) = 0$ for $u \in \{-2U,...,2U\}\setminus\{0\}$. Thus, aperiodic ZCZ sequences \cite{Donelan2002, Han2007, Wu2005} can also be good candidates for the RS.
\end{remark}
%
\subsubsection{Optimizing $\hat{G}$}
Using $G = \hat{G}+2U$ and assuming the use of a Theorem \ref{Th_opt_RS_design}-satisfying RS for each $\hat{G}$, we also observe that $C^{\rm aprx}(\boldsymbol{\beta})$ is a decreasing function of $\hat{G}$ for a fixed $E_{\rm r}, \chi$. Thus for a given RS energy per symbol $E_{\rm r}$, the throughput-optimal choice of $\hat{G}$ is: 
$$\hat{G}^{\rm opt} = \big\lceil (E_{\rm r}-\bar{E})\big/ {2\bar{E}} \big\rceil, $$ where $\bar{E}$ is the spectral mask limit on energy per sub-carrier (see Section \ref{sec_chan_model}). In other words, a smaller $\hat{G}$ (number of RS sub-carriers) is preferred to improve the spectral efficiency, but a minimum number is also required to satisfy the spectral mask regulations for a given RS power $E_{\rm r}$. As a side note, it should also be ensured that $\hat{G}^{\rm opt}/T_{\rm s}$ is sufficiently smaller than the channel coherence bandwidth for \eqref{eqn_Yk} to hold. 
\subsubsection{Optimizing $E_{\rm r}$}
Note that $E_{\rm r} = 0$ and $E_{\rm r} = E_{\rm s}$ are both poor allocations with $C^{\rm aprx}(\boldsymbol{\beta})=0$, and thus the best power allocation lies somewhere in between. Since $C^{\rm aprx}(\boldsymbol{\beta})$ is differentiable in $E_{\rm r}$, a valid stationary point for $E_{\rm r}$ can be computed by setting the partial derivative $\frac{\partial [ E_{\rm r} (E_{\rm s}-E_{\rm r}) / \Xi(\boldsymbol{\beta} ]}{ \partial E_{\rm r}} = 0$ as:
\begin{align}
E_{\rm r}^{\rm statn} &= \frac{-A_0 + \sqrt{A_0^2 + A_0 E_{\rm s} (A_1+A_2 E_{\rm s})}}{A_1 + A_2 E_{\rm s}}, \label{eqn_Er_opt}
\end{align}
where $A_i$ is the coefficient of ${(E_{\rm r})}^i$ in the cumbersome \eqref{eqn_xi_def}, and we note that ${\Upsilon}_p$ and ${\rm Tr}\{ E_{\rm r} \mathbf{R}^{-1}_{p} \}$ are independent of $E_{\rm r}$. The best choice $E_{\rm r}^{\rm opt}$ is then simply the solution among $\{0, E_{\rm s}, E_{\rm r}^{\rm statn}\}$ with the highest value of $C^{\rm aprx}(\boldsymbol{\beta})$.
Note that if $\hat{G}$ is prefixed and cannot be updated based on \eqref{eqn_Er_opt} to meet the spectral mask bound, then the reference power has to be reduced to $E_{\rm r}^{\rm opt} = \min\{E_{\rm r}^{\rm opt}, |\hat{\mathcal{G}}| \bar{E}\}$.
\subsubsection{Optimizing $U$}
Finally, from \eqref{eqn_xi_def}, it can be verified that the power of the interference term decreases with $U$, while that of the noise term increases with $U$. Similarly, from the summation limits in \eqref{eqn_cap}-\eqref{eqn_cap_LB2} and $G \geq \hat{G}+2U$, it can be observed that the RS overhead increases with $U$. Thus $U$ provides a trade-off between ICI suppression, noise enhancement and RS overhead. While finding a closed form expression for the $C^{\rm aprx}(\boldsymbol{\beta})$-maximizing $U$ (for given $E_{\rm r}, \hat{G}$) is intractable, it can be computed numerically by performing a simple line search of $C^{\rm aprx}(\boldsymbol{\beta})$ over the range $0 \leq U \leq [\min\{K_1,K_2\}-\hat{G}]/2$. 

\section{Simulation Results} \label{sec_sim_results}
For the simulations, we consider a single cell THz system with a $\lambda/2$-spaced $32 \times 8$ ($M_{\rm tx} = 256$) antenna TX and a representative RX with a $\lambda/2$-spaced $8 \times 4$ (${M}_{\rm rx} = 32$) antenna panel. We also assume perfect timing synchronization between the TX and RX. Using prior knowledge of the channel spatial correlation matrix, the TX and RX are assumed to beamform along the strongest cluster of channel MPCs \cite{Alkhateeb2015, Caire2017}. 
The TX transmits one spatial OFDM data stream with $T_{\rm s}=1 \mu$s, $K_1=K_2+1=512$ and transmits the length $11$ Barker sequence for the RS ($\hat{G}=5$). 
Unless otherwise stated, the TX and RX are both assumed to have $f_{\rm c} = 90$GHz oscillators with no carrier frequency offset and the PhN is modeled as a Wiener processes, having variances of $\sigma_{\theta,{\rm tx}}^2=\sigma_{\theta,{\rm rx}}^2=0.1/T_{\rm s}$ (see Appendix \ref{appdix_wiener}). This corresponds to the practically relevant PhN level of $-103$ dBc/Hz at $10$MHz offset \cite{Zhang2015}. The RX is also assumed to have perfect knowledge of $\boldsymbol{\beta}$ and $\mathrm{N}_0$. We first validate the derived analytical results in Section \ref{subsec_validate_results} and shall compare performance of the proposed optimized scheme to other PhN estimation schemes in Section \ref{subsec_compare_schemes}. The results in this section are presented as a function of the average post-beamforming SNR without PhN: $E_{\rm s} {|\beta_{\rm rms}|}^2/{\rm N}_0 K$, where we define ${|\beta_{\rm rms}|}^2 \triangleq \sum_k {|\beta_k|}^2/K $. 

\subsection{Validating analytical results} \label{subsec_validate_results}
To validate the analytically derived results in the paper, we shall compare them to numerical results obtained via Monte-Carlo simulations. 
For the numerical results, the SINR $\Gamma_k(\boldsymbol{\beta})$ for sub-carrier $k$ is estimated, via Monte-Carlo iterations, as the inverse mean-square distance of the soft decoded data-symbols from the transmitted constellation points. Note that this SINR definition is equivalent to the inverse square of the signal error vector magnitude. Similarly, the ergodic throughput is numerically computed using the numeric SINR values as $\sum_{k \in \mathcal{K} \setminus \mathcal{G}} \log[1 + \Gamma_k(\boldsymbol{\beta})]/K$. For the numerical results, the exact value of $\chi$ is statistically computed, while for the analytical bounds we use the worst case $\chi$ from \eqref{eqn_worst_chi}. For easy reproducibility, we consider a sample post-TX beamforming channel matrix $\mathbf{H}(t) \mathbf{t}$ with $\tilde{L}=3$ MPCs, ${\tau}_{\ell}= [0, 20, 40]$ns, angles of arrival $\psi^{\rm rx}_{\rm azi} = [0, \pi/10, -\pi/10]$, $\psi^{\rm rx}_{\rm ele} = [0.45\pi, \pi/2, 0.4\pi]$ and effective post-beamforming amplitudes $\alpha_{\ell} \mathbf{a}_{\rm tx}(\ell)^{\dag} \mathbf{t} = [\sqrt{0.6}, -\sqrt{0.3}, \sqrt{0.1}]$. The RX beamformer is assumed to be $\mathbf{r} = \tilde{\mathbf{a}}_{\rm rx}(\psi^{\rm rx}_{\rm azi}(1), \psi^{\rm rx}_{\rm ele}(1)) \big/ | \tilde{\mathbf{a}}_{\rm rx}(\psi^{\rm rx}_{\rm azi}(1), \psi^{\rm rx}_{\rm ele}(1))|$ (see \eqref{eqn_array_response_planar}).

\subsubsection{Second order statistics}
For the aforementioned channel model and fixed $E_{\rm r} = 0.01 E_{\rm s}$, the second moments of the signal, interference and noise components $\hat{S}_k, \hat{I}_k, \hat{Z}_k$ are compared to their analytical bounds \eqref{eqn_Sk_stats}-\eqref{eqn_Z_k_stats} in Fig.~\ref{Fig_compare_SIZ_stats}. As observed from Fig.~\ref{Fig_compare_SIZ_stats}, the bounds on $\mathbb{E}\{{|\hat{S}_k|}^2\}, \mathbb{E}\{{|\hat{Z}_k|}^2\}$ are tight $\forall k$, while the bound on $\mathbb{E}\{{|\hat{I}_k|}^2\}$ in \eqref{eqn_Ik_stats} is only tight for $|k| \gg 1$. This is due to the looseness of the Cauchy-Schwartz inequality in \eqref{eqn_I1k_stats}. In addition in Fig.~\ref{Fig_compare_SIZ_stats}, we also plot the approximate bounds on $\mathbb{E}\{{|\hat{I}_k|}^2\}, \mathbb{E}\{{|\hat{Z}_k|}^2\}$, obtained after ignoring the terms $\bar{\sigma}^2_{\hat{I}_k^{(1)}}$ and $\bar{\sigma}^2_{\hat{Z}_k^{(1)}}$, as considered in Section \ref{sec_RS_design}. Results show that these approximations are quite accurate and also help significantly improve the tightness of the bound on $\mathbb{E}\{{|\hat{I}_k|}^2\}$. 
For the same system settings, the simulated and analytical SINR for the different sub-carriers are studied as a function of average post-beamforming SNR in Fig.~\ref{Fig_compare_SINR_vs_SNR}. We observe that while the SINR bound $\Gamma^{\rm LB}_k(\boldsymbol{\beta})$ is tight for high frequency sub-carriers, the approximate SINR bound $\Gamma^{\rm aprx}_k(\boldsymbol{\beta})$ yields a tight bound $\forall k$. Note that these observations justify our decision to optimize system parameters using \eqref{eqn_aprx_SINR_cap} in Section \ref{sec_RS_design}.
\begin{figure}[!htb]
\centering
\vspace{-0.4cm}
\subfloat[Statistics comparison]{\includegraphics[width= 0.4\textwidth]{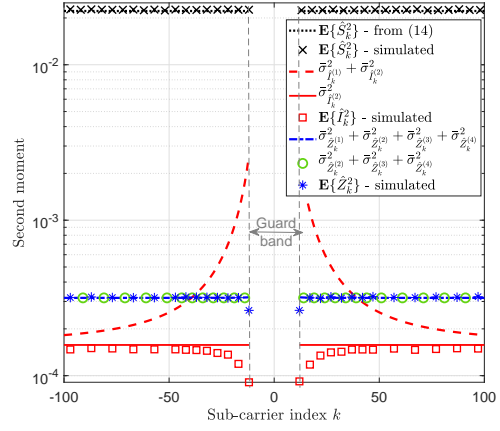} \label{Fig_compare_SIZ_stats}} \hspace{0.5cm}
\subfloat[SINR comparison]{\includegraphics[width= 0.4\textwidth]{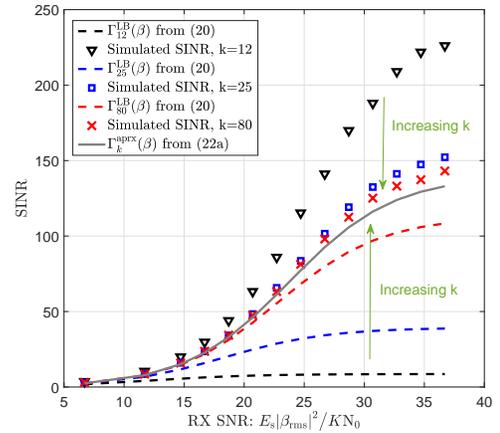} \label{Fig_compare_SINR_vs_SNR}}
\caption{Comparison of the analytical and simulated values of (a) $\mathbb{E}\{{|\hat{S}_k|}^2\}$, $\mathbb{E}\{{|\hat{I}_k|}^2\}$, $\mathbb{E}\{{|\hat{Z}_k|}^2\}$ for $E_{\rm s}{|\beta_{\rm rms}|^2}/({\rm N}_0 K) = 20$ dB and (b) SINR $\Gamma_k(\boldsymbol{\beta})$ for varying SNR. We use $U=3, G=11,\sigma_{\theta}^2=0.2/T_{\rm s}, E_{\rm r} = 0.01E_{\rm s}$ and use 16-QAM modulated data symbols for simulations.}
\label{Fig_verify_SIZ_SINR}
\end{figure}

\subsubsection{Optimal RS parameters}
We next study the impact of the parameters $U, E_{\rm r}$ on the system throughput (including RS overhead) in Fig.~\ref{Fig_Eopt_compare}. Here, while $\mathbf{p}/\|\mathbf{p}\|$ and $\hat{G}$ are held constant, we use the throughput optimal $G, E_{\rm r}$, i.e., $G = \hat{G}+2U$ and $E_{\rm r}$ from  \eqref{eqn_Er_opt}. As observed from the results, the throughput changes unimodally with $U$. While the low throughput at lower $U$ is due to the PhN induced ICI, the poor performance at high $U$ is due to noise accumulation and the RS + null sub-carrier overhead. We also note that the throughput-optimal $U$ increases with SNR. 
We also observe that the analytical bound $C^{\rm LB}(\boldsymbol{\beta})$ is loose for $U \gg 1$, which is due to the looseness of \eqref{eqn_I1k_stats}. However, we observe that $C^{\rm aprx}(\boldsymbol{\beta})$ has an excellent match with the simulated throughput. For simulation results, we additionally also use brute-force search to find the optimal $E_{\rm r}$ value. We observe that the proposed $E_{\rm r}$ value \eqref{eqn_Er_opt} provides almost identical performance to the brute force search result. 
Furthermore, the analytically optimal $U$ coincides with the simulated optimal $U$. For the simulation results, we also include in Fig.~\ref{Fig_Eopt_compare} the performance of the RX when each sub-carrier coefficient of the RS $\mathbf{p}$ is picked randomly from the 16QAM alphabet. As evident from the results, the optimal design of the RS suggested in Section \ref{sec_RS_design} provides significant gains over an arbitrarily picked RS. These observations validate the choice of parameters in Section \ref{sec_RS_design}. 
\begin{figure}[!htb]
\centering
\vspace{-0.4cm}
\includegraphics[width= 0.4\textwidth]{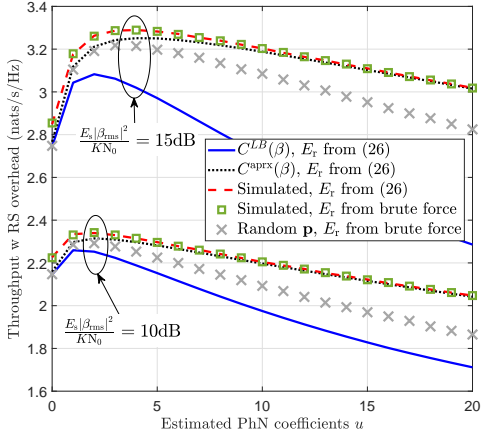}
\vspace{-0.2cm}
\caption{Comparison of the analytical and simulated values of throughput versus $U$ with optimal $E_{\rm r}$. We use $\sigma_{\theta}^2=0.2/T_{\rm s}, \bar{E}=E_{\rm s}$, and we use 16-QAM modulated data symbols for numerically computing capacity.}
\label{Fig_Eopt_compare}
\end{figure}

\subsubsection{LMMSE estimation and frequency offset} Next, we compare the symbol error rate (SER) of the system, averaged over sub-carriers, as a function of SNR in Fig.~\ref{Fig_SER_LS_vs_MMSE} for both LS and LMMSE based PhN estimators for \eqref{eqn_PhN_est_mtrx_form}. Here, while $E_{\rm r}, \hat{G}, G$ are held fixed, we use (i) $U=12$ for the LMMSE estimator and (ii) use $U=U^{\rm opt}$ for the LS estimator, where the $C^{\rm aprx}(\boldsymbol{\beta})$ maximizing $U^{\rm opt}$ is obtained for each SNR value by a line search over $U \in [0, 12]$ as discussed in Section \ref{sec_RS_design}. While not require for the LS estimator, we assume perfect knowledge of the PhN second-order cross-statistics $\Delta_{k_1, k_2}$ for the LMMSE estimator. Results show that knowledge of the cross-statistics $\Delta_{k_1, k_2}$ for $k_1 \neq k_2$ indeed provide some performance gains for the MMSE estimator over the LS estimator. However such knowledge might not be available in practice or can even by faulty. As an example, in Fig.~\ref{Fig_SER_LS_vs_MMSE} we also depict the performance of a scenario where the TX and RX oscillators have a $1$ MHz carrier frequency offset. As is evident, the proposed PhN mitigation technique also provides good performance and outperforms LMMSE estimator if there exists an unknown oscillator frequency mismatch smaller than $U^{\rm opt}/T_{\rm s}$. In other words, it is more resilient to PhN modeling errors than LMMSE.\footnote{While not depicted here for brevity, the simulated SER of LMMSE estimation by ignoring the cross-terms (i.e., using $\Delta_{k_1,k_2} = 0$ for $k_1 \neq k_2$) is almost identical to the LS estimator.}
\begin{figure}[!htb]
\centering
\vspace{-0.4cm}
\includegraphics[width= 0.4\textwidth]{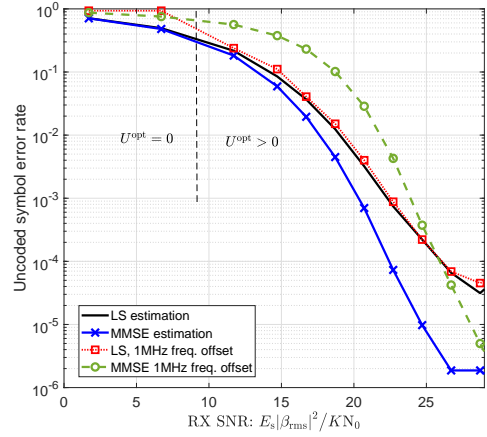}
\vspace{-0.4cm}
\caption{Comparison of SER (averaged over data sub-carriers) versus SNR for LS and LMMSE estimation of PhN. We use $G=30,\sigma_{\theta}^2=0.2/T_{\rm s}, E_{\rm r} = 0.01E_{\rm s}$, $U=12$ for LMMSE, $U \in [0, 12]$ for LS and we use 16QAM modulated data symbols for simulations.}
\label{Fig_SER_LS_vs_MMSE}
\end{figure}

\subsection{Comparison to other schemes} \label{subsec_compare_schemes}
Finally to justify the choice of our scheme, in Fig.~\ref{Fig_SER_compare_schemes} we compare the performance of (i) our optimized PhN estimation technique to  (ii) PhN estimation with sinusoidal RS ($\hat{G}=0$) \cite{Randel2010}, (iii) CPE only estimation \cite{Guo2017}, (iv) Iterative decision feedback based PhN estimation \cite{Wu2002, Petrovic2007} and (v) system with no PhN (an SER lower bound), under the 3GPP Rel. 15  UMa LoS and NLoS channel models \cite{TR38900_chanmodel}. Additionally, we also include the performance of (vi) cyclic prefix aided single carrier transmission with symbol duration $T_{\rm s}/K$ and time domain PhN estimation. While the schemes (i), (ii), (v) do not require channel estimates prior to PhN estimation, for schemes (iii), (iv) and (vi) we assume apriori perfect channel estimates. 
We also assume $U \in [0,20], G$ and $E_{\rm r}$ for schemes (i) and (ii) are designed in accordance to Section \ref{sec_RS_design}, and the same values of $U$ as in scheme (i) are used for scheme (iv). A per sub-carrier spectral mask limit of $\bar{E} = 5 E_{\rm s}/K$ is considered, that limits the pilot/RS power boosting. To keep the total pilot + null sub-carrier overhead comparable across schemes, we use $30$ uniformly-spaced pilot sub-carriers, each with power $\bar{E}$ for schemes (iii) and (iv), while for scheme (ii) we use a single RS with power $\bar{E}$. For the single-carrier transmission scheme (vi), use $30$ uniformly spaced pilots of energy ${E}_{\rm s}/K$ in every $K$ symbols, the RX estimates PhN phase at pilot symbols after zero-forcing channel equalization, and uses piece-wise linear phase interpolation to obtain PhN phase at the intermediate data symbols. For these schemes, the SERs with uncoded 16QAM transmission, averaged over sub-carriers and channel realizations, are presented in Figs.~\ref{Fig_SER_compare_schemes_LoS} and Fig.~\ref{Fig_SER_compare_schemes_nLoS}. Here the RX beamformer is designed as $\mathbf{r} = \tilde{\mathbf{a}}_{\rm rx}(\psi^{\rm rx}_{\rm azi}, \psi^{\rm rx}_{\rm ele}) \big/ | \tilde{\mathbf{a}}_{\rm rx}(\psi^{\rm rx}_{\rm azi}, \psi^{\rm rx}_{\rm ele})|$, where $\tilde{\mathbf{a}}_{\rm rx}(\cdot)$ is from \eqref{eqn_array_response_planar} and $\psi^{\rm rx}_{\rm azi}, \psi^{\rm rx}_{\rm ele}$ are the central azimuth and elevation angles of arrival for the strongest MPC cluster. A similar design is used for the TX beamformer $\mathbf{t}$. 
\begin{figure}[!htb]
\centering
\vspace{-0.4cm}
\subfloat[UMa LoS channel]{\includegraphics[width= 0.4\textwidth]{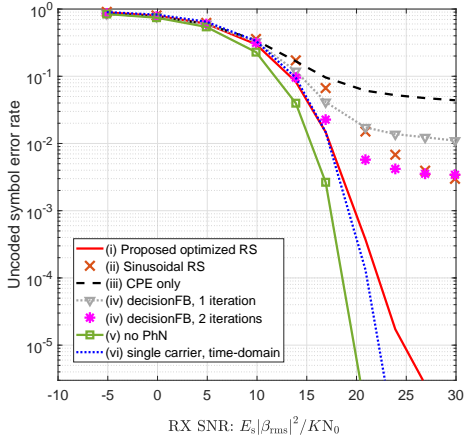} \label{Fig_SER_compare_schemes_LoS}} \hspace{0.5cm}
\subfloat[UMa nLoS channel]{\includegraphics[width= 0.4\textwidth]{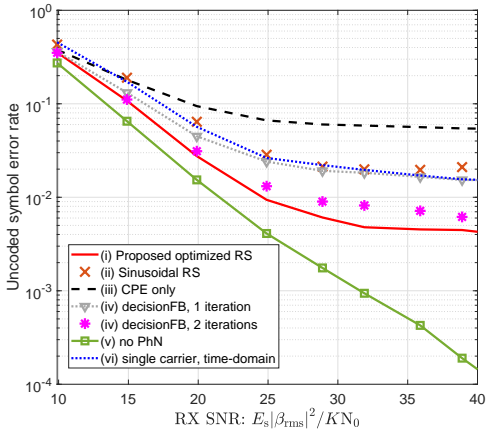} \label{Fig_SER_compare_schemes_nLoS}}
\caption{Comparison of the average SER versus SNR for different PhN compensation schemes (i)--(vi) in the UMa LoS and nLoS channels \cite{TR38900_chanmodel}. Here the SER averaging is performed over data sub-carriers and channel realizations, and we use $\psi^{\rm tx}_{\rm azi}=\psi^{\rm rx}_{\rm azi}=0$ and $\psi^{\rm tx}_{\rm ele}=\psi^{\rm rx}_{\rm ele}=\pi/2$ for the LoS path in Fig.~\ref{Fig_SER_compare_schemes_LoS}.}
\label{Fig_SER_compare_schemes}
\end{figure}

For LoS scenario, we observe that schemes (i) and (vi) significantly outperform (ii), (iii) and (iv), while only requiring a fraction of the computational effort as (iv). Here the poor performance of scheme (ii) is due to the spectral mask limit $\bar{E}$ on the power of the sinusoidal RS and of scheme (iii) is due to the un-compensated ICI after CPE correction. The decision feedback approach scheme (iv) achieves only a limited performance improvement over (iii) even after two decision feedback iterations due to the errors in data decoding from previous iterations, i.e., error propagation. 
For the nLoS scenario, the SER performance of all schemes, including the no PhN bound (v), take a hit due to the higher amount of frequency selective channel fading experienced. Here we observe that the proposed scheme (i) also outperforms scheme (vi), in addition to schemes (ii)-(iv), due to the high ISI between data and pilot symbols experienced by single-carrier systems in nLoS channels (with larger delay spreads). We also observe that for a post-beamforming SNR $< 25$dB, scheme (i) is reasonably close to the no PhN limit (v). Thus, with appropriate PhN mitigation, multi-carrier systems can match, and possibly out-perform, single-carrier systems even in a PhN limited regime. 
Overall, we observe that the proposed optimized PhN compensation technique outperforms several existing solutions, and thus, the presented analysis is a good representation of the state-of-the-art performance achievable in analog beamforming systems by using PhN compensation. 

\section{Conclusion} \label{sec_conclusions}
This paper proposes a novel RS-aided PhN estimation and mitigation technique for multi-antenna multi-carrier high frequency systems, wherein the RS is packed compactly in the frequency domain and is separated from data by null sub-carriers. The frequency concentration of the RS helps decouple PhN estimation from channel estimation and also helps keep the null sub-carrier overhead low, while the null sub-carriers prevent the interference between the RS and data. The detailed mathematical analysis shows that the low frequency spectral components of the PhN dominate its behavior, and they can be estimated using the received sub-carriers in the vicinity of the RS. The analysis also shows that a trade-off exists between the RS overhead, ICI and noise accumulation in the proposed scheme, which can be realized by varying the estimated number of PhN spectral components. Furthermore, we conclude that throughput-optimal designs for the RS include Barker sequences and aperiodic ZCZ sequences. Simulations support the analytical results and show that the derived system parameters can enable achieving near optimal performance. The results also show that the proposed technique can compensate for small frequency offsets between the TX and RX oscillators. Finally, simulations show that the proposed PhN compensation technique outperforms several other existing multi-carrier alternatives and can, in addition, also outperform single-carrier schemes in channels with moderately large delay spreads. 

\begin{appendices}

\section{Wiener PhN model} \label{appdix_wiener}
For the Wiener PhN model at $\star = {\rm tx}/{\rm rx}$, $\theta_{\star}[n]$ is a non-stationary Gaussian process. The increments $\theta_{\star}[n]-\theta_{\star}[n-1]$ are independent, zero-mean Gaussian with a variance of $\sigma^2_{\theta, \star} T_{\rm s}/K$ for each $n$, and $\sigma_{\theta, \star}$ is a parameter quantifying the process. Consequently, the sum PhN $\theta[n] = \theta_{\rm tx}[n]+\theta_{\rm rx}[n]$ is also a Wiener process with parameter $\sigma_{\theta}^2 = \sigma_{\theta, {\rm tx}}^2 + \sigma_{\theta, {\rm rx}}^2$.
\begin{lemma} \label{Lemma_PN_properties}
For the Wiener model, the statistics of the PhN nDFT coefficients satisfy:
\begin{subequations} \label{eqn_PN_lemma}
\begin{flalign}
& \Delta_{k_1,k_2} \triangleq \mathbb{E}\{\Omega_{k_1} \Omega_{k_2}^{*}\} & \nonumber \\
&= \left\{ \begin{array}{cc}
\frac{1}{K^2} \big[\frac{K (1 - {|\rho_1|}^2 )}{( 1-\rho_1^{*} )( 1 - \rho_1) } - \frac{\rho_1^{*} (1 - e^{-\frac{\sigma_{\theta}^2 T_{\rm s}}{2}})}{{(1 - \rho_1^{*})}^2} \!\!\!\!\!\!\!\! & \\
- \frac{\rho_1 (1 - e^{-\frac{\sigma_{\theta}^2 T_{\rm s}}{2}})}{{(1 - \rho_1)}^2}  \big] & \text{for $k_1=k_2$} \!\!\!\!\!\!\!\! \\
\frac{1 - e^{-\frac{\sigma_{\theta}^2 T_{\rm s}}{2}}}{K^2(1 - \rho_2/ \rho_1)} \big[\frac{1}{1 - \rho_2} - \frac{1}{1 - \rho_2^{*}} & \\
+ \frac{1}{1 - \rho_1^{*}} - \frac{1}{1 - \rho_1} \big] & \text{for $k_1 \neq k_2$,} \!\!\!\!\!\!\!\!
\end{array} \right. & \label{eqn_lemma_2}
\end{flalign}
\end{subequations}
for arbitrary integers $k_1,k_2$ and $\sigma_{\theta} > 0$, where $\rho_1 \triangleq e^{- \frac{\sigma_{\theta}^2 T_{\rm s} - {\rm j}4\pi k_1}{2K}}$ and $\rho_2 \triangleq e^{-\frac{\sigma_{\theta}^2 T_{\rm s} - {\rm j}4 \pi k_2}{2K}}$. Furthermore, $\Delta_{k,k}$ is a decreasing function of $|k|$ for $\frac{K}{2 \pi}\cos^{-1}\big( \frac{2 \bar{\rho}}{1 + \bar{\rho}^2} \big) \leq |k| \leq K/2$, where $\bar{\rho} \triangleq e^{- \frac{\sigma_{\theta}^2 T_{\rm s}}{2K}}$. 
\end{lemma}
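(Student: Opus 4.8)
The plan is to start from $\Omega_k = \frac{1}{K}\sum_{n=0}^{K-1} e^{-{\rm j}\theta[n]} e^{-{\rm j}2\pi kn/K}$ and expand $\Delta_{k_1,k_2} = \mathbb{E}\{\Omega_{k_1}\Omega_{k_2}^{*}\} = \frac{1}{K^2}\sum_{n,m} \mathbb{E}\{e^{-{\rm j}(\theta[n]-\theta[m])}\}\, e^{-{\rm j}2\pi(k_1 n - k_2 m)/K}$. Since $\theta[n]-\theta[m]$ is a sum of $|n-m|$ independent zero-mean Gaussian increments each of variance $\sigma_\theta^2 T_{\rm s}/K$, the Gaussian characteristic function gives $\mathbb{E}\{e^{-{\rm j}(\theta[n]-\theta[m])}\} = e^{-\frac{1}{2}|n-m|\sigma_\theta^2 T_{\rm s}/K} = \bar\rho^{\,|n-m|}$, reducing the whole problem to evaluating a truncated double geometric sum with a Toeplitz kernel. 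I would record at the outset the identities I reuse: the factors $\bar\rho\, e^{\pm{\rm j}2\pi k/K}$ are exactly $\rho_1,\rho_1^{*}$ (and similarly $\rho_2$), so that $\rho_1^K = (\rho_1^{*})^K = \rho_2^K = \bar\rho^K = e^{-\sigma_\theta^2 T_{\rm s}/2}$, which is precisely why all boundary terms will collapse into the factor $1 - e^{-\sigma_\theta^2 T_{\rm s}/2}$ appearing in the statement.

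For the diagonal case $k_1=k_2=k$ the kernel depends only on $\ell=n-m$, so I would collect terms by $\ell$ with multiplicity $K-|\ell|$ and write $\Delta_{k,k} = \frac{1}{K^2}\big[K + 2\sum_{\ell=1}^{K-1}(K-\ell)\bar\rho^{\,\ell}\cos(2\pi k\ell/K)\big]$. Splitting $(K-\ell)$ turns this into a geometric sum plus an arithmetic-geometric sum in $\rho_1$ and $\rho_1^{*}$; substituting $\rho_1^K=\bar\rho^K$ and combining conjugate pairs yields the stated form, in which the leading piece $K(1-\bar\rho^2)/[(1-\rho_1)(1-\rho_1^{*})]$ is the untruncated two-sided sum and the two remaining fractions are the finite-$K$ boundary corrections.

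For $k_1\neq k_2$ I would split the double sum by the sign of $n-m$; the diagonal $n=m$ contributes $\sum_n e^{-{\rm j}2\pi(k_1-k_2)n/K}=0$. In each half I substitute $d=|n-m|$, perform the inner finite geometric sum over $m$, and use $e^{-{\rm j}2\pi(k_1-k_2)}=1$ to replace $w^{K-d}$ by $w^{-d}$, where $w=\rho_2/\rho_1=e^{-{\rm j}2\pi(k_1-k_2)/K}$. Each half then collapses to a difference of two geometric series in $\rho_1^{*},\rho_2^{*}$ (respectively $\rho_1,\rho_2$); rewriting $\frac{x-\bar\rho^K}{1-x}=-1+\frac{1-\bar\rho^K}{1-x}$ cancels the constants and produces the four-fraction expression with prefactor $1/[K^2(1-\rho_2/\rho_1)]$ and common factor $1-e^{-\sigma_\theta^2 T_{\rm s}/2}$, matching \eqref{eqn_lemma_2}.

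The monotonicity claim is the step I expect to demand the most care. First I would observe that, through $\rho_1=\bar\rho\,e^{{\rm j}2\pi k/K}$, the diagonal value depends on $k$ only via $c\triangleq\cos(2\pi k/K)$: indeed $(1-\rho_1)(1-\rho_1^{*})=1-2\bar\rho c+\bar\rho^2$ and the conjugate-pair boundary term simplifies to a ratio whose numerator is $2\bar\rho[(1+\bar\rho^2)c-2\bar\rho]$. Writing $\Delta_{k,k}=g(c)$ and differentiating, the sign of $g'(c)$ equals that of $K(1-\bar\rho^2)D-(1-\bar\rho^K)(u'D+4\bar\rho u)$, where $D=1-2\bar\rho c+\bar\rho^2>0$, $u=(1+\bar\rho^2)(c-c^{*})$, $u'=1+\bar\rho^2$, and $c^{*}=2\bar\rho/(1+\bar\rho^2)$ is exactly the value that makes the boundary-term numerator vanish. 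For $c\le c^{*}$ the $u$-term is nonnegative, so it suffices to prove $K(1-\bar\rho^2)\ge(1-\bar\rho^K)(1+\bar\rho^2)$; factoring $1-\bar\rho^K=(1-\bar\rho)\sum_{i=0}^{K-1}\bar\rho^{\,i}$ and $1-\bar\rho^2=(1-\bar\rho)(1+\bar\rho)$ reduces this to $\big(\sum_{i=0}^{K-1}\bar\rho^{\,i}\big)(1+\bar\rho^2)\le K(1+\bar\rho)$, which follows from $\sum_{i=0}^{K-1}\bar\rho^{\,i}\le K$ together with $1+\bar\rho^2\le 1+\bar\rho$. Hence $g'(c)\ge 0$ for $c\le c^{*}$; since $c=\cos(2\pi k/K)$ decreases as $|k|$ grows over $[0,K/2]$, this is precisely the assertion that $\Delta_{k,k}$ decreases in $|k|$ on $\frac{K}{2\pi}\cos^{-1}\!\big(\tfrac{2\bar\rho}{1+\bar\rho^2}\big)\le|k|\le K/2$. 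The main obstacle is the bookkeeping showing that $g$ is a function of $c$ alone and that the sign of $g'$ collapses to this elementary inequality.
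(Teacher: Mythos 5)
Your derivation of \eqref{eqn_lemma_2} follows the paper's proof essentially step for step: the Gaussian characteristic function turns the double sum into a Toeplitz kernel $\bar\rho^{\,|n-m|}$, the diagonal case is collected by lag with multiplicity $K-|\ell|$ and evaluated as a geometric plus arithmetico-geometric sum, and the off-diagonal case is split by the sign of $n-m$ with two nested geometric sums, the identity $\rho_1^K=\rho_2^K=\bar\rho^{\,K}=e^{-\sigma_\theta^2 T_{\rm s}/2}$ collapsing the boundary terms exactly as you anticipate. Where you genuinely diverge is the monotonicity step, and your version is the stronger one. The paper writes $\Delta_{k,k}=\mathcal{F}(\cos(2\pi k/K))$ with $\mathcal{F}(x)=\frac{1-\bar\rho^2}{K(1+\bar\rho^2-2\bar\rho x)}+\frac{2\bar\rho(1-\bar\rho^K)[2\bar\rho-(1+\bar\rho^2)x]}{K^2(1+\bar\rho^2-2\bar\rho x)^2}$ and asserts that \emph{both} terms are increasing in $x$ for $x\le 2\bar\rho/(1+\bar\rho^2)$. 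That assertion is not quite right: setting $y=1+\bar\rho^2-2\bar\rho x$, the second term equals $\frac{1-\bar\rho^K}{K^2}\big[\frac{1+\bar\rho^2}{y}-\frac{(1-\bar\rho^2)^2}{y^2}\big]$, which (as a function of $x$) peaks where $y=2(1-\bar\rho^2)^2/(1+\bar\rho^2)$ and then decreases to zero at the endpoint $x=2\bar\rho/(1+\bar\rho^2)$, so it is decreasing on part of the claimed interval. The lemma nevertheless holds because the first term dominates there, and your derivative computation --- reducing the sign of $g'(c)$ for $c\le c^{*}$ to the inequality $K(1-\bar\rho^2)\ge(1-\bar\rho^K)(1+\bar\rho^2)$, proved via $\sum_{i=0}^{K-1}\bar\rho^{\,i}\le K$ and $1+\bar\rho^2\le 1+\bar\rho$ --- supplies exactly the estimate needed to make this rigorous. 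So your proposal is correct, matches the paper on the computation of $\Delta_{k_1,k_2}$, and on the monotonicity claim repairs a small gap in the published argument. (One pedantic caveat: your vanishing of the $n=m$ diagonal for $k_1\neq k_2$ requires $k_1\not\equiv k_2\ ({\rm mod}\ K)$; the excluded case reduces to $k_1=k_2$ by the $K$-periodicity of the $\Omega_k$, a point the paper also leaves implicit.)
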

\begin{proof}
Property \eqref{eqn_lemma_2}, for $k_1 \neq k_2$ can be obtained as follows:
\begin{flalign}
& \Delta_{k_1,k_2} = \mathbb{E}\{\Omega_{k_1} {\Omega_{k_2}}^{*}\} & \nonumber \\
&= \frac{1}{K^2} \sum_{\dot{n}, \ddot{n} = 0}^{K-1} \mathbb{E}\{ e^{-{\rm j}[\theta[\dot{n}]-\theta[\ddot{n}]} \} e^{-{\rm j}2 \pi \frac{[k_1 \dot{n} - k_2 \ddot{n}]}{K}} & \nonumber \\
& \stackrel{(1)}{=} \frac{1}{K^2} \sum_{\dot{n}, \ddot{n} = 0}^{K-1} e^{-\frac{\sigma_{\theta}^2 |\dot{n}-\ddot{n}| T_{\rm s}}{2K}} e^{-{\rm j}2 \pi \frac{[k_1 \dot{n} - k_2 \ddot{n}]}{K}} & \nonumber \\
& \stackrel{(2)}{=} \frac{1}{K^2} \sum_{d = 0}^{K-1} \sum_{\dot{n} = d}^{K-1} e^{-\frac{\sigma_{\theta}^2 d T_{\rm s}}{2K}} e^{-{\rm j}2 \pi \frac{k_2 d}{K}} e^{-{\rm j}2 \pi \frac{(k_1-k_2) \dot{n}}{K}} & \nonumber \\
& \qquad + \frac{1}{K^2} \sum_{d = -K+1}^{0} \sum_{\dot{n} = 0}^{K+d-1} e^{\frac{\sigma_{\theta}^2 d T_{\rm s}}{2K}} e^{-{\rm j}2 \pi \frac{k_2 d}{K}} e^{-{\rm j}2 \pi \frac{(k_1-k_2) \dot{n}}{K}} & \nonumber \\
& \stackrel{(3)}{=} \frac{1}{K^2} \sum_{d = 0}^{K-1} e^{-\frac{\sigma_{\theta}^2 d T_{\rm s}}{2K}} \frac{e^{-{\rm j}2 \pi \frac{k_2 d}{K}} - e^{-{\rm j}2 \pi \frac{k_1 d}{K}} }{ e^{-{\rm j}2 \pi \frac{(k_1-k_2)}{K}} - 1} & \nonumber \\
& \qquad + \frac{1}{K^2} \sum_{d = 0}^{K-1} e^{-\frac{\sigma_{\theta}^2 d T_{\rm s}}{2K}} \frac{ e^{{\rm j}2 \pi \frac{k_1 d}{K}} - e^{{\rm j}2 \pi \frac{k_2 d}{K}}}{ e^{-{\rm j}2 \pi \frac{(k_1-k_2)}{K}} - 1} & \nonumber \\
& \stackrel{(4)}{=} \frac{1 - e^{-\frac{\sigma_{\theta}^2 T_{\rm s}}{2}}}{K^2(e^{-{\rm j}2 \pi \frac{(k_1-k_2)}{K}} - 1)} & \nonumber \\
& \qquad \bigg[\frac{1}{1 - \rho_2^{*}} - \frac{1}{1 - \rho_2} + \frac{1}{1 - \rho_1} - \frac{1}{1 - \rho_1^{*}} \bigg] & \label{eqn_lemma1_proof1}
\end{flalign}
where ${\scriptstyle\stackrel{(1)}{=}}$ follows by using the expression for the characteristic function of the Gaussian random variable $\theta[\dot{n}] - \theta[\ddot{n}]$; ${\scriptstyle\stackrel{(2)}{=}}$ follows by defining $d = \dot{n} - \ddot{n}$ and observing that the second term is zero for $d=0$; ${\scriptstyle\stackrel{(3)}{=}}$ follows by computing the sum of the geometric series over $\dot{n}$, and ${\scriptstyle\stackrel{(4)}{=}}$ follows by computing the sum of the geometric series over $d$ and using the definition of $\rho_1, \rho_2$ in the lemma statement. Similarly, for the case of $k_1=k_2$, we have:
\begin{flalign}
& \Delta_{k_1,k_1} \stackrel{(5)}{=} \frac{1}{K^2} \bigg[-K + \sum_{d = 0}^{K-1} \sum_{\dot{n} = d}^{K-1} e^{-\frac{\sigma_{\theta}^2 d T_{\rm s}}{2K}} e^{-{\rm j}2 \pi \frac{k_1 d}{K}} & \nonumber \\
& \qquad + \sum_{d = 0}^{K-1} \sum_{\dot{n} = 0}^{K-d-1} e^{- \frac{\sigma_{\theta}^2 d T_{\rm s}}{2K}} e^{{\rm j}2 \pi \frac{k_1 d}{K}} \bigg] & \nonumber \\
& = \frac{1}{K^2} \bigg[-K + \sum_{d = 0}^{K-1} (K-d) \Big[ {(\rho_1^{*})}^d + {(\rho_1)}^d \Big] \bigg] & \nonumber \\
& \stackrel{(6)}{=} \frac{1}{K^2} \bigg[-K + \frac{K}{1-\rho_1^{*}} + \frac{K}{1 - \rho_1} & \nonumber \\
& \qquad - \frac{\rho_1^{*} (1 - e^{-\frac{\sigma_{\theta}^2 T_{\rm s}}{2}})}{{(1 - \rho_1^{*} )}^2} - \frac{\rho_1 (1 - e^{-\frac{\sigma_{\theta}^2 T_{\rm s}}{2}})}{{(1 - \rho_1 )}^2}  \bigg] & \nonumber \\
& = \frac{1}{K^2} \bigg[\frac{K \big(1 - e^{- \frac{\sigma_{\theta}^2 T_{\rm s}}{K}} \big)}{( 1-\rho_1^{*} ) ( 1 - \rho_1 ) } & \nonumber \\
& \qquad - \frac{\rho_1^{*} (1 - e^{-\frac{\sigma_{\theta}^2 T_{\rm s}}{2}})}{{(1 - \rho_1^{*})}^2} - \frac{\rho_1 (1 - e^{-\frac{\sigma_{\theta}^2 T_{\rm s}}{2}})}{{(1 - \rho_1 )}^2}  \bigg] & \label{eqn_lemma1_proof2}
\end{flalign}
where ${\scriptstyle\stackrel{(5)}{=}}$ follows by using $k_1=k_2$ and $d = \dot{n} - \ddot{n}$ in ${\scriptstyle\stackrel{(1)}{=}}$ of \eqref{eqn_lemma1_proof1} and ${\scriptstyle\stackrel{(6)}{=}}$ follows by using the expression for the sum of an arithmetico-geometric series. Using \eqref{eqn_lemma1_proof1} and \eqref{eqn_lemma1_proof2}, property \eqref{eqn_lemma_2} follows. Next, for $\bar{\rho} = e^{-\frac{\sigma_{\theta}^2}{2K}}$, let us define the function:
\begin{eqnarray}
\mathcal{F}(x) \triangleq \frac{(1 - {\bar{\rho}}^2 )}{K( 1+{\bar{\rho}}^2 - 2 \bar{\rho} x ) } + \frac{2 \bar{\rho} (1 - {\bar{\rho}}^K)[ 2{\bar{\rho}} - (1+ {\bar{\rho}}^2) x]}{K^2{( 1+{\bar{\rho}}^2 - 2 \bar{\rho} x )}^2 }. \nonumber
\end{eqnarray}
It can then be shown from \eqref{eqn_lemma_2} that $\Delta_{k,k} = \mathcal{F}(\cos(2\pi k/K))$. Furthermore, it can readily be observed that both terms of $\mathcal{F}(x)$ are increasing functions of $x$ for $x \leq 2 \bar{\rho}/(1+\bar{\rho}^2)$. Thus, $\Delta_{k,k}$ is an increasing function of $\cos(2 \pi k/K)$ for $\cos(2 \pi k/K) \leq 2 \bar{\rho}/(1+\bar{\rho}^2)$. Finally since $\cos(2 \pi k/K)$ is a decreasing function of $|k|$ for $0 \leq |k| \leq K/2$, the result follows.
\end{proof}
These second-order statistics are depicted for typical $\sigma_{\theta}^2$ values in Fig.~\ref{Fig_DFT_stats}. From the monotone decreasing property of $\Delta_{k,k}$ with $|k|$ and from Fig.~\ref{Fig_DFT_stats}, we observe that most of the PhN power is concentrated in a few dominant spectral components around $k=0$, thus justifying criterion C2. 
\begin{figure}[!htb]
\centering
\vspace{-0.4cm}
\subfloat[$\Delta_{k_1,k_2} = \mathbb{E}\{\Omega_{k_1} \Omega_{k_2}^{*}\}$]{\includegraphics[width= 0.46\textwidth]{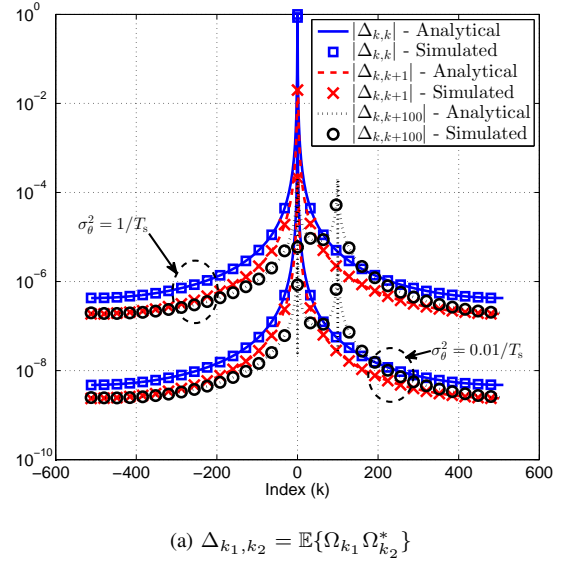} \label{Fig_DFT_stats}} \hspace{0.5cm}
\caption{Comparison of analytical and simulated statistics of the nDFT coefficients of a Wiener PhN process with $T_{\rm s} =1 \mu$s, $K_1=K_2+1=512$.}
\label{Fig_verify_DFT_theorem}
\end{figure}
\end{appendices}

\bibliographystyle{ieeetr}
\bibliography{references}

\begin{IEEEbiography}[{\includegraphics[width=1in,height=1.25in,clip,keepaspectratio]{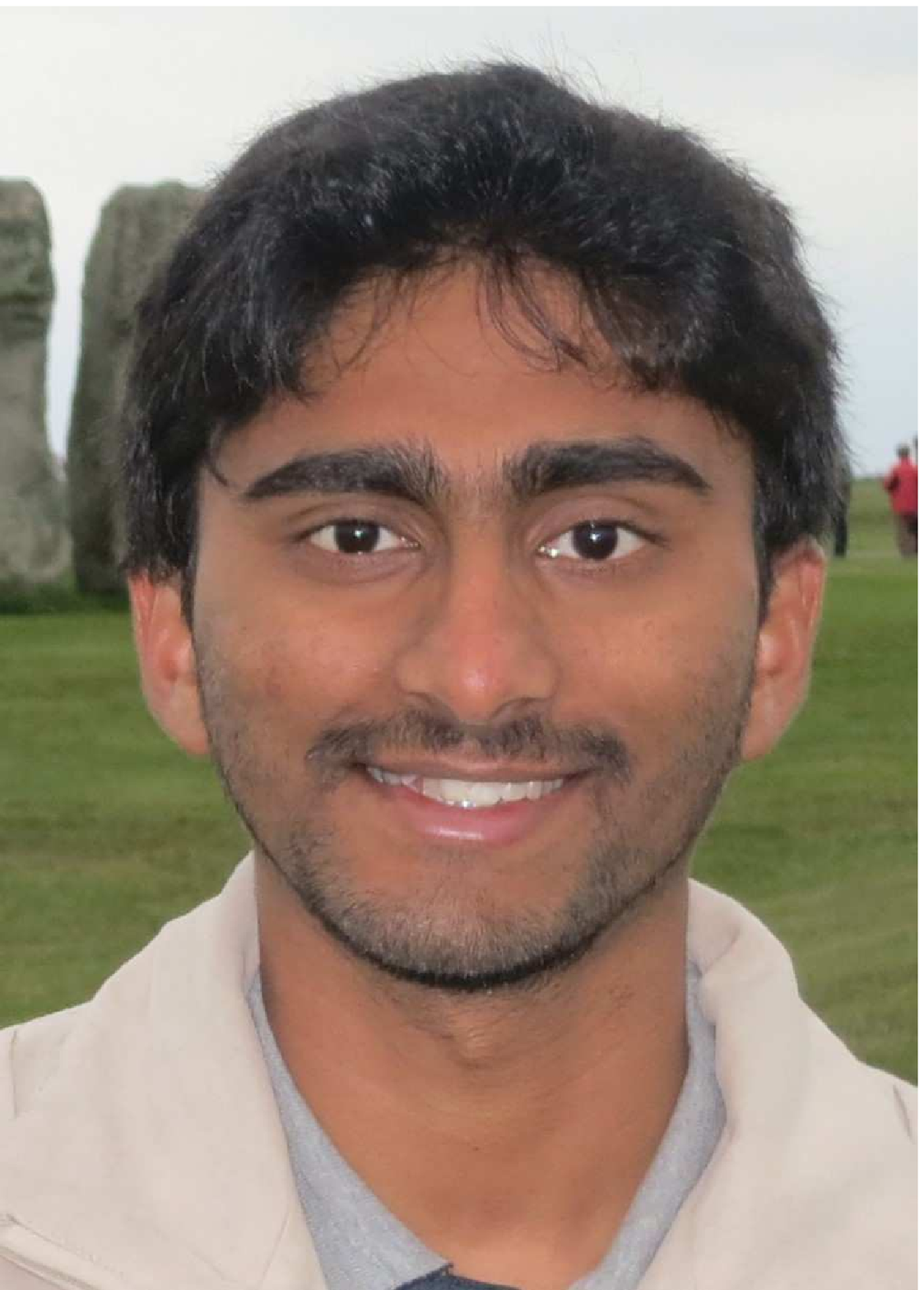}}]{Vishnu V. Ratnam}
(S'10--M'19) received the B.Tech. degree (Hons.) in electronics and electrical communication engineering from IIT Kharagpur, Kharagpur, India in 2012, where he graduated as the Salutatorian for the class of 2012. He received the Ph.D. degree in electrical engineering from University of Southern California, Los Angeles, CA, USA in 2018. He is currently a senior research engineer in the Standards and Mobility Innovation Lab at Samsung Research America, Plano, Texas, USA. His research interests are in AI for wireless, mm-wave and Terahertz communication, massive MIMO, channel estimation and manifold signal processing, and resource allocation problems in multi-antenna networks. 

Dr. Ratnam is a recipient of the Best Student Paper Award at the IEEE International Conference on Ubiquitous Wireless Broadband (ICUWB) in 2016, the Bigyan Sinha memorial award in 2012 and is a member of the Phi-Kappa-Phi honor society.
\end{IEEEbiography}

\end{document}